\theoremstyle{plain}
\newtheorem{corollary}{Corollary}  
\newtheorem{remark}{Remark} 
\newtheorem{lemma}{Lemma}
\begin{document}
\renewcommand{\figurename}{Fig.}

\title{Orthogonality Analysis in LoRa Uplink Satellite Communications Affected by Doppler Effect}

\author{JIKANG DENG}
\member{Student Member, IEEE}
\affil{King Abdullah University of Science and Technology, Jeddah, SA} 

\author{FATMA BENKHELIFA}
\member{Member, IEEE}
\affil{Queen Mary University of London, London, UK.} 

\author{MOHAMED-SLIM ALOUINI}
\member{Fellow, IEEE}
\affil{King Abdullah University of Science and Technology, Jeddah, SA}

\authoraddress{Jikang Deng and Mohamed-Slim Alouini are with CEMSE Division, 
King Abdullah University of Science and Technology (KAUST),
Thuwal, 23955-6900, Saudi Arabia,
(e-mail: \href{mailto:jikang.deng@kaust.edu.sa}{jikang.deng@kaust.edu.sa};
\href{mailto:slim.alouini@kaust.edu.sa}{slim.alouini@kaust.edu.sa}); Fatma Benkhelifa is with the School of Electronic Engineering and Computer Science (EECS),
Queen Mary University of London (QMUL), 
Mile End Road, London E1 4NS, United Kingdom, (e-mail: \href{mailto:f.benkhelifa@qmul.ac.uk}{f.benkhelifa@qmul.ac.uk}). 
\itshape (Corresponding author: Fatma Benkhelifa)}

\maketitle

\begin{abstract}
This paper provides, for the first time, analytical expressions for the Long-Range (LoRa) waveform and cross-correlation in both continuous and discrete time domains under the Doppler effect in satellite communication. We propose the concept and formulas of the shared visibility window for satellites toward two ground devices. Our analysis covers cross-correlation results with varying spreading factors (SF) and bandwidths for both no-Doppler and with-Doppler cases.
We find that the maximum cross-correlation with different SFs and the mean cross-correlation are immune to the Doppler effect. However, the maximum cross-correlation with the same SFs is only immune to high Doppler shift, with its value fluctuating between 0.6 and 1 under the condition of high Doppler rate. We interpret this fluctuation by introducing the relationship between transmission start time and cross-correlation.
We provide a parameter analysis for orbit height, ground device distance, and inclination angle. Finally, we analyze the bit error rate (BER) for LoRa signals and observe worse performance under the high Doppler shift or interference with the same SF. We observe that increasing the signal-to-noise ratio (SNR) or the signal-to-interference ratio (SIR) improves the BER only when the Doppler effect is below a frequency threshold. Notably, under the Doppler effect, the performance behaviors of BER no longer align with those of maximum cross-correlation.
Finally, our results lead to two recommendations: 1) To mitigate the Doppler impact on cross-correlation, we recommend utilizing low SFs, high orbit height, short ground device distance, and the transmission start time with high Doppler shift; 2) To mitigate the Doppler impact on BER, we recommend employing low SFs, high bandwidth, and transmission start time with high Doppler rate. These conflicting recommendations regarding transmission start time highlight the necessity of Doppler shift compensation techniques to help operate LoRa in space properly.
}
\end{abstract}


\begin{IEEEkeywords}
Communication system performance, Doppler effect, Satellite communication
\end{IEEEkeywords}

\section{INTRODUCTION}
T{\scshape he} Internet of Things (IoT) is an emerging concept that will continuously have a significant impact on the socio-economic growth of nations. It encompasses a wide range of applications that touch different areas of our daily lives, such as smart agriculture, healthcare, and asset management. IoT is a large network consisting of various devices or sensors that interact and communicate with each other. Through collecting and analyzing data from the devices, the IoT system will draw conclusions and generate more scientific decisions for making everyone's life more efficient and convenient.

Low power wide area networks (LPWANs), which emerged in 2010, are mainly used for achieving massive IoT and critical IoT connectivity due to their wide coverage range and great energy efficiency \cite{chettri2019comprehensive}. Up to now, there are several promising LPWANs technologies, such as Long-Range (LoRa), SigFox, narrow-band IoT (NB-IoT), and so on. LoRa, a technology developed by Cycleo and acquired by Semtech now, is competitive for modeling IoT systems that operate over long-range distances with very low power consumption \cite{Semtech_LoRa}. To achieve long range, a correlation mechanism is used based on band spreading methods. This mechanism ensures even extremely small signals that may be lost in the noise can be accurately demodulated by the receiver. Meanwhile, LoRa receivers are still able to decode signals, which are up to $19.5$ dB below the noise. With those distinguished advantages, LoRa devices are becoming promising supports for emerging applications such as water and gas metering, cold chain monitoring, and traffic optimization. It is estimated that LoRa will be the leading non-cellular LPWANs technology by 2026, and will account for over $50$\% of all non-cellular LPWANs \cite{krishnan2020lorawan}.

As a physical layer (PHY) implementation, LoRa is based on a spread spectrum modulation technique derived from chirp spread spectrum (CSS) technology and mixed with frequency shift keying (FSK) \cite{vangelista2017frequency,bor2016lora}, and that is patented by Semtech. During the working process, the initial serial data will experience error correction encoding based on Hamming codes and coding rate, whitening, interleaving, and CSS modulation before being sent. What's more, LoRa uses two options of packet header format, explicit and implicit, for data transmission, described in \cite{LoRaWAN}. Its range of communication can reach up to $5$ km in urban areas and $15$ km or much more in rural areas. It is also able to be operated on $2.4$ GHz to achieve higher data rates compared to sub-gigahertz bands, at the expense of range.

LoRa wide area network (LoRaWAN) is a Media Access Control (MAC) layer protocol developed by LoRa Alliance and opened to the public. It is officially approved as a standard for LPWANs by the International Telecommunication Union (ITU) \cite{LoRaWAN}. The channel access mechanism is based on an ALOHA-like channel access technique, described in \cite{el2018lora}.  Considering the different needs of IoT applications for energy autonomy and data exchange, it also defines three options for scheduling the receive window slots for downlink communication: class A, B, and C \cite{petajajarvi2017performance}. 

As a significant part of 6G communication, satellite communication needs to establish a mature and efficient network. LoRa has been suggested as one of the best candidates for satellite communication because it meets many requirements of satellite IoT scenarios, and has low power consumption and long-range propagation \cite{kiki2022improved,wu2019research}. What's more, LoRa over satellite can help achieve some of the goals of 6G such as global connectivity.
It has already been used in the launches of some satellite companies such as Lacuna, Actility, Wyld Networks, etc. 
Furthermore, Low-Earth-Orbit (LEO) satellites, especially CubeSat nanosatellites, will be the main carriers for LoRa technology due to their low altitude and short propagation time. However, we should notice that low altitude will lead to the high velocity of the satellite, causing high Doppler shifts and difficulties in demodulating the received signals. It is claimed that LoRa is known to be immune to Doppler shift \cite{liando2019known}. However, there are still not enough detailed criteria for the applicability of the LoRa modulation in terms of orthogonality under strong Doppler shift in satellite communication, not to mention the case of high Doppler rate.
Therefore, it is meaningful to carry out a precise analysis of the immunity of LoRa modulation in the presence of Doppler shift, then we can explore the potential method to extend LoRa's ability to cope with the Doppler shift, and improve the performance of satellite communication based on LoRa technology.

\subsection{Related Work}
LPWANs technologies have captured tremendous research interest in investigating the performance of different technologies under interference from the LEO satellite scenario. \cite{temim2022low,wang2022performance} concluded the feasibility of deployment with LEO satellite and performance of LoRa, SigFox, and NB-IoT. All of them can reach a $1000$ km communication range under specific settings, satisfying LEO satellite's needs. Specifically, LoRa is based on a license-free band with bandwidth such as $125$ kHz or $250$ kHz and LoRa data rate can range from as low as 0.3 kbit/s to around 50 kbit/s per channel, depending on the configuration of the spreading factor and bandwidth \cite{LoRaAlliance2020}.

As for the immunity of LoRa toward the Doppler effect, some recent research studies analyzed its sensitivity in mobility scenarios, such as vehicle-to-everything (V2X) scenario \cite{torres2021experimental, li2018lora}
and satellite-to-Earth communication scenarios \cite{doroshkin2018laboratory,doroshkin2019experimental,zadorozhny2022first,cao2021influence,colombo2022low,ameloot2021characterizing,ullah2023understanding,lapapan2021lora}.  
\cite{torres2021experimental, li2018lora} showed that LoRa is sensitive to the Doppler effect due to the velocity and concluded that lower $SF$ has less sensitivity to the effect due to a higher reception threshold based on the results of packet delivery rate.
\cite{doroshkin2018laboratory,doroshkin2019experimental} proved that LoRa holds immunity to the Doppler effect for orbit with altitudes higher than $550$ km. \cite{zadorozhny2022first} proposed that under $SF \leq 11$ and bandwidth $B> 31.25$ kHz, LoRa has high immunity towards the Doppler effect, based on the results obtained from NORBY Cubesat operating at $560$ km.
\cite{cao2021influence} concluded that when $SF$ is greater than $9$, within the range of the maximum Doppler-Rate, the packet error rate will reach $100$\% as the Doppler rate increases.  
Among them, some works were carried out with TLE dataset from the satellites \cite{zadorozhny2022first}, while other researchers set up some similar scenarios or computer environments to simulate the satellite-to-Earth communication \cite{doroshkin2018laboratory,doroshkin2019experimental,cao2021influence}.

Other research papers discussed and distinguished the impact of high Doppler shift or Doppler rate. \cite{colombo2022low} proposed a low-cost SDR method for evaluating LoRa satellite communication. 
\cite{ameloot2021characterizing} evaluates the impact of Doppler shift on LoRa in body-centric applications and analyzes the performance of symbol detection levels such as bit error rate (BER) or symbol error rate (SER). 
\cite{ullah2023understanding} analyzed the impact of the high Doppler shift or Doppler rate based on packet losses of LoRa in a Direct-to-Satellite (DtS) connectivity scenario and offered suggestions for the selection of suitable parameters to mitigate the Doppler effect based on a parameter analysis. They further demonstrated that changes in $SF$, payload length, and LoRaWAN low data rate optimization (LDRO) impact the vulnerability to the high Doppler rate more than that to the high Doppler shift.

\begin{table*}[ht]
    \centering
    \renewcommand{\arraystretch}{1.2} 
    \caption{Related Work on Evaluation of LoRa Modulation or LoRa Satellite Communication}
    \label{tab:related_work}
    \scriptsize
    \begin{tabular}{|c|p{1.5cm}|c|c|p{1cm}|p{1cm}|p{1.3cm}|p{7cm}|}
        \hline
        \multirow{2}{*}{\centering \textbf{Ref.}} & 
        \multirow{2}{*}{\parbox{2.5cm}{\textbf{Evaluation}\\\textbf{Index}}} &
        \multirow{2}{*}{\parbox{0.7cm}{\textbf{Doppler}\\\textbf{Effect}}} & 
        \multirow{2}{*}{\parbox{0.7cm}{\textbf{Inter-}\\\textbf{ference}}} & \multirow{2}{*}{\parbox{1cm}{\textbf{Model}\\\textbf{Dimension}}}  & \multicolumn{2}{c|}{\textbf{Parameter Summary}} & \multirow{2}{*}{\textbf{{Descirption}}}\\
        \cline{6-7}
        & & & & & \textbf{SF} &  \textbf{Height (km)} & \\
        \hline
        \cite{doroshkin2018laboratory} & Radio link immunity margin & \ding{51} & \ding{55} & 2D & 7, 11 & 200 &  A laboratory testing to show the immunity of LoRa radio link to Doppler shift\\ 
        \hline
        \cite{doroshkin2019experimental} & Packet success ratio & \ding{51} & \ding{55} & 2D & 7, 11, 12 & 200\,-\,550  & The laboratory testing and outdoor experiments on the feasibility of LoRa modulation in satellite communication under Doppler effect\\ 
        \hline
        \cite{zadorozhny2022first} & Packet success number & \ding{51} & \ding{55} & - & 7, 10, 11, 12 &  560 & The flight tests of LoRa modulation for robustness against the Doppler effect in the NORBY CubeSat based satellite-to-Earth radio channel \\ 
        \hline
        \cite{cao2021influence} & PER & \ding{51} & \ding{55} & 2D & 7\,-\,12 & 600 & Analysis of the influence of the Doppler effect on LoRa in satellite-to-Earth radio channel, and a Doppler Rate estimation algorithm  \\ 
        \hline
        \cite{colombo2022low} & PER & \ding{51} & \ding{55} & 2D & 7\,-\,12 & 200, 550 & A low-cost SDR-based tool for emulating Doppler effect and evaluate LoRa satellite communication \\ 
        \hline
        \cite{ameloot2021characterizing} & BER, SER & \ding{51} & \ding{55} & 2D & 7\,-\,12 & - & Assessment of Doppler effect on LoRa symbol detection performance based on experiments with an SDR-based LoRa symbol detector.\\ 
        \hline
        \cite{ullah2023understanding} & PDR & \ding{51} & \ding{55} & 2D & 7, 10, 12 & 560,\,750,\,1000 & Analysis of LoRa DtS performance under Doppler effect in LEO scenario \\ 
        \hline
        \cite{lapapan2021lora} & RSSI & \ding{51} & \ding{55} & 2D & - & 500,\,550,\,600 & Theoretical investigation and laboratory test of LoRa multi-channel access immunity to Doppler effect in satellite communication \\ 
        \hline
        \cite{croce2017impact} & PER, Data extraction rate  & \ding{55} & \ding{51} & - & 6\,-\,12 & -  &  A study of the impact of SF imperfect orthogonality on LoRa communication \\ 
        \hline
        \cite{benkhelifa2022orthogonal} & Orthogonality,
        BER, PDR & \ding{55} & \ding{51} & - & 5\,-\,12 & - & A study of the orthogonality of LoRa modulation  \\ 
        \hline
        Ours & Orthogonality, BER & \ding{51} & \ding{51} & 3D & 5\,-\,12  & 200\,-\,900 & A study of the orthogonality of LoRa uplink satellite communication under Doppler effect  \\ 
        \hline
    \end{tabular}
\end{table*}

However, the aforementioned related works demonstrated various research gaps. 
\cite{doroshkin2018laboratory, doroshkin2019experimental, zadorozhny2022first, ullah2023understanding} lacked a thorough analysis of LoRa in terms of different $SF$. \cite{doroshkin2018laboratory, zadorozhny2022first, cao2021influence,colombo2022low, ullah2023understanding, lapapan2021lora} only focused on a specific setting of orbit height, which cannot be generalized. 
The indexes used for evaluating the performance of LoRa in the presence of Doppler effect are limited to radio link immunity margin \cite{doroshkin2018laboratory}, packet delivery ratio (PDR) or packet success ratio, \cite{doroshkin2019experimental,zadorozhny2022first,ullah2023understanding}, SER or BER \cite{ameloot2021characterizing}, packet error rate (PER) \cite{cao2021influence,colombo2022low}, and received signal strength indication (RSSI) \cite{lapapan2021lora}. There is still no detailed analysis focusing on the LoRa interference and the orthogonality or cross-correlation of LoRa waveforms. \cite{benkhelifa2022orthogonal,croce2017impact} had analyzed the orthogonality or cross-correlation of LoRa and proved LoRa to be nonorthogonal, but they were based on the scenario without the Doppler effect. What's more, to generalize the conclusions, the system model containing the ground device and the satellite should be determined on a three-dimensional (3D) plane. But \cite{doroshkin2019experimental, doroshkin2018laboratory, cao2021influence, colombo2022low, ameloot2021characterizing, ullah2023understanding, lapapan2021lora} modeled the Doppler effect on a two-dimensional (2D) plane, assuming that the satellite can always pass directly above the ground device. In addition, 
\cite{colombo2022low,ameloot2021characterizing,ullah2023understanding} provided questionable conclusions on the Doppler shift effect on LoRa performance. 
\cite{colombo2022low} showed that LoRa performance on PER is immune to the high Doppler shift, but not to the high Doppler rate. \cite{ameloot2021characterizing} confirmed that BER or SER are sensitive to the high Doppler shift, which can be eliminated if the Doppler shift does not exceed $10$\% of the bandwidth. 
\cite{ullah2023understanding} showed that the impact of high Doppler shift or Doppler rate on PDR depends on various parameters ($SF$, payload length, orbit height, etc.). To clearly summarize the contributions and limitations of above related work, we provide Table \ref{tab:related_work}, with a particular focus on the Doppler effect, interference, model dimension, and some related parameters.


Therefore, considering the limitations of related works, this paper will systematically analyze the immunity of LoRa modulation under the Doppler effect in a general 3D satellite-to-Earth communication scenario given various parameters ($SF$, orbit height, bandwidth, ground device distance, etc.). What's more, this paper will focus on analyzing the impact of another LoRa interference signal, both high Doppler shift and high Doppler rate respectively on the orthogonality criteria and BER performance. 

\subsection{Contributions}
The main contributions of our work are: 

\begin{itemize}
\item We provide the analytical expressions for the satellite's motion and the Doppler shift in a 3D communication scenario with two stationary ground devices and one LEO satellite. It is worth noting that this model is generalizable to any possible orbit height, inclination, or position of ground devices. We define the shared visibility window between a satellite and two ground devices and provide the approximation method for Doppler shift with high accuracy.
\item We provide, for the first time, the general expressions of LoRa waveform and cross-correlation functions under Doppler shift in both continuous and discrete time domains. We conclude the cases when the immunity towards Doppler shift is acquired. To explain how a certain cross-correlation varies under the Doppler shift, we have designed a theoretical explanation strategy, which introduces the relationship between transmission start time, differential Doppler shift, carrier frequency difference, and cross-correlation.
\item We provide the parameter analysis based on the sensitivity of cross-correlation towards the Doppler effect under different orbit heights, inclinations, and ground device distances. We also characterize the impact of the interference signal through the BER results. We finally offer parameter configuration suggestions separately for improving the immunity of cross-correlation or BER towards the Doppler effect based on the parameter analysis.
\item We compare cross-correlation and BER results under high Doppler shift or high Doppler rate. Furthermore, we suggest the necessity of Doppler shift compensation techniques to ensure the good performance of LoRa in satellite communication.
\end{itemize}

The remainder of the paper is organized as follows. Section \ref{System model} introduces the system model containing LoRa modulation and the satellite's motion. Section \ref{Doppler_characteriztion} characterized the Doppler shift, shared visibility window, and the Doppler approximation method. Section \ref{Cross-correlation_functions} shows the details of analytical cross-correlation expressions under the Doppler effect in both continuous and discrete time domains. Section \ref{Numerical_results} provides the cross-correlation results in no-Doppler and with-Doppler cases, the parameter analysis, the BER results, and the corresponding suggestions for parameter choosing. The conclusion of this paper is given in Section \ref{conclusion}.

\section{SYSTEM MODEL}
\label{System model}
In this paper, we consider the uplink transmissions from two stationary LoRa ground devices, communicating with one moving LEO satellite equipped with LoRa transceiver. Fig. \ref{orbit} illustrated the satellite's motion geometry around the LoRa ground devices. Our target is to characterize the cross-correlation functions of signals transmitted by these two LoRa ground devices to the satellite.
Generally, for each satellite orbit, there are six orbital elements: semi‑major axis of the satellite orbit, eccentricity, mean anomaly, inclination, right ascension of the ascending node, and argument of periapsis\cite{rouzegar2019estimation}. Those elements will determine the orbit's shape, size, orientation, and the satellite's location. In our scenario, we assume the satellite is in the type of Walker Constellation, where the orbit is circular. \cite{doroshkin2019experimental,lapapan2021lora}.
\begin{figure}[ht]
  \centering
  \includegraphics[width = 8cm]{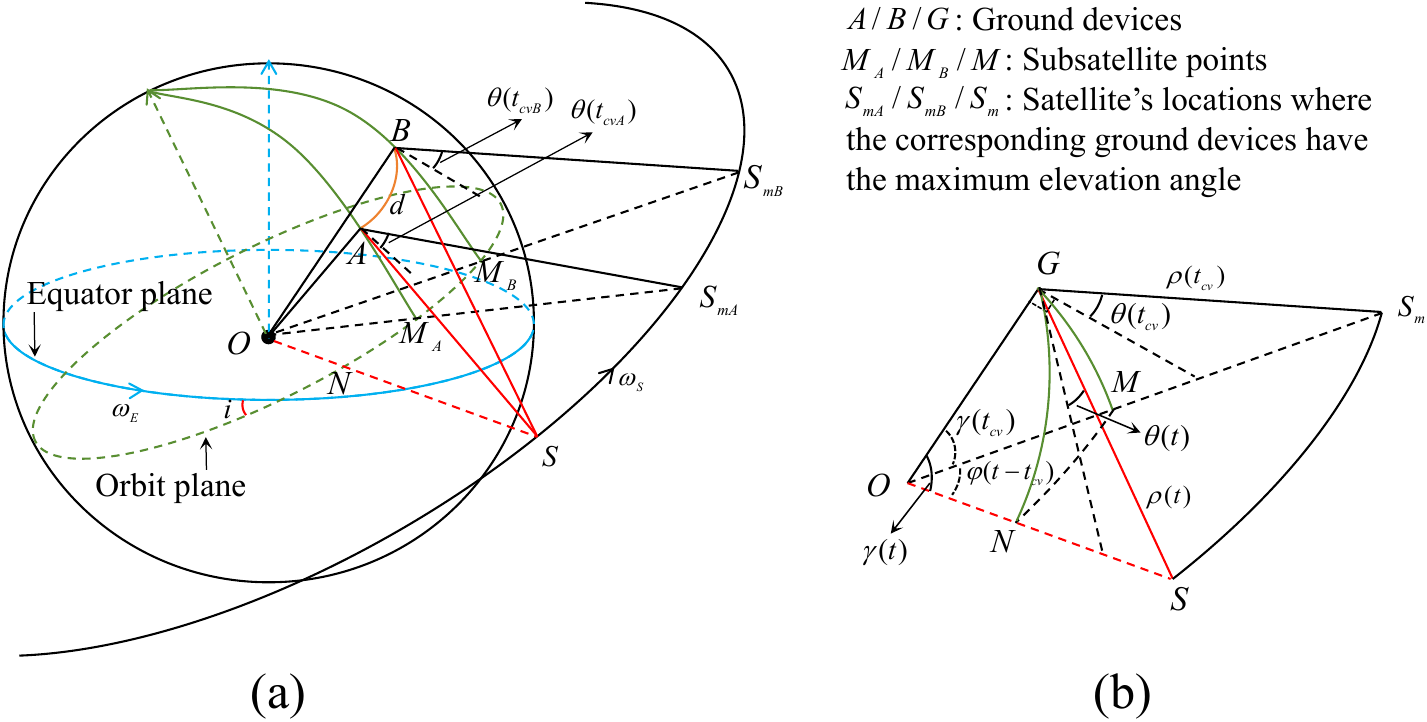}
  \caption{Satellite's motion geometry towards ground devices: (a) Overall geometry structure; (b) Detailed geometry analysis}
  \label{orbit}
\end{figure}
\subsection{LoRa Waveform Under Doppler Shift}
The satellite and the ground devices use LoRa technology. LoRa is known for its long communication range and low power consumption. It is developed to operate in the licence-free frequency spectrum which varies from one region to another region. For example, China uses the CN470-510 MHz, Europe uses the EU863-870 MHz, and United States uses the US902-928 MHz, etc. Meanwhile, up to now, LoRa is designed to configure eight $SF$s, ranging from $5$ to $12$, which represents the number of bits to be transmitted in one symbol. Thus, the symbol value of one waveform will range from $0$ to $2^{SF}\!-\!1$. The common bandwidths for LoRa are $125$ kHz, $250$ kHz, and $500$ kHz. There are four coding rates used in LoRa: $4/5$, $4/6$, $5/7$, and $4/8$. The coding rate is used for adding the redundant bits to the data to be transmitted, helping to realize the forward error correction (FEC) \cite{Semtech_LoRa,LoRaWAN,bor2016lora}. 
Based on CSS modulation, the frequency of LoRa is time-varying and bounded by its bandwidth. The frequency and waveform functions without Doppler shift have been given by \cite{benkhelifa2022orthogonal,chiani2019lora}.

In this scenario, we define the Doppler shift as $f_d(t)$ or $f_d(n)$ for continuous and discrete time domains respectively, whose analytical derivation and expression will be provided in the following sections. Therefore, we can derive the frequency and waveform functions under Doppler shift in both continuous and discrete time domains.

\textit{Continuous-time Domain:} The frequency function when given symbol $k$ is 
\begin{equation}
  \label{lora-con-f}
  \begin{aligned} 
    f_{k,d}(t)  & \!=\!\begin{cases}
      f_{\min}\!+\!f_d(t)\!+\!\frac{B}{2^{SF}}(\frac{t-t_0}{T} +k),
      \\ \hfill \hfill \text{if }0\le t-t_0 \le t_k, \\
      f_{\min}\!+\!f_d(t)\!+\!\frac{B}{2^{SF}}(\frac{t-t_0}{T}+k\!-\!2^{SF}), \\ \hfill \hfill \text{if }t_k<  t-t_0 \le T_s.
    \end{cases}
  \end{aligned}
\end{equation}
Here $f_{\min}$ is the minimum frequency of LoRa signal and $f_{\min} = f_c - \frac{B}{2}$, where $f_c$ is the carrier frequency, and $B$ is the bandwidth, $T_s$ is the symbol period which can be obtained by $T_s = 2^{SF}T$ and $T = 1/B$. In addition, $t_0$ is the starting time of LoRa signal, $k$ is the symbol to be transmitted, and $t_k$ is the shrinking time, which is obtained from $t_k = T(2^{SF}-k)$.

Given that $\phi_d(t_0) = 0$, the phase shift caused by the Doppler shift can be given by $\phi_d(t) = \phi_d(t_0) + \int_{t_0}^{t} 2\pi f_d(\tau) \, d\tau$. Thus, we obtain the waveform function:
\begin{equation}
  \label{lora-con-wave}
  \begin{aligned} 
    s^\mathcal{C} _{k,d}(t) & =\begin{cases}
      \frac{1}{\sqrt{T_s}}e^{2j\pi \left(f_{\min}+\frac{B}{2^{SF}}(\frac{t-t_0}{2T} +k)\right)(t-t_0) \,+\, j\phi_d(t)},\\ 
      \hfill \text{if } 0\le t-t_0 \le t_k,\\
      \frac{1}{\sqrt{T_s}}e^{2j\pi \left(f_{\min}+\frac{B}{2^{SF}}(\frac{t-t_0}{2T} +k-2^{SF})\right)(t-t_0) \,+\,j\phi_d(t)},\\ 
      \hfill \text{if } t_k<  t-t_0 \le T_s. 
    \end{cases}
  \end{aligned}
\end{equation}

\textit{Discrete-time Domain:}
We sample the transmitted waveform with sampling time $T_d$. if $T_d = T$, there will be $2^{SF}$ samples. Otherwise, the number of samples will be $T_s/T_d$.  To make it general, we define $T_d = 2^s\cdot T$, then the frequency function is:
\begin{equation}
  \label{lora-dis-f}
  \begin{aligned} 
    f_{k,d}(n) & \!=\!\begin{cases}
      f_{\min}\!+\!f_d(n)\!+\!\frac{B\cdot2^s}{2^{SF}}(n\!-\!m_0\!+\!\frac{k}{2^s}),
      \\
      \hfill\text{if }0\!\le\! n\!-\!m_0 \!\le\! m_k,\\
      f_{\min}\!+\!f_d(n)\!+\!\frac{B\cdot2^s}{2^{SF}}(n\!-\!m_0\!+\!\frac{k-2^{SF}}{2^s}),
      \\
      \hfill\text{if }m_k \!<\!n\!-\!m_0 \!\le\! m_{\textit{SD}}. 
    \end{cases}
  \end{aligned}
\end{equation}
In the discrete time domain, the bandwidth $B=2^{\textit{s}}/T_d$, and $N$ is the number of samples, which can be calculated by $N = \left \lfloor \frac{2^{SF}}{2^s}  \right \rfloor$. The starting sample is $\textit{m}_0=t_0/T_d$, and the ending sample is $m_{\textit{SD}}=N-1$. In addition, the shrink sample is $m_k = \left \lfloor \frac{2^{SF}-k-1}{2^s}  \right \rfloor$. 

Similarly, with the sampled phase shift $\phi_d(n)$, the waveform function is 
\begin{equation}
  \label{lora-dis-wave}
  \begin{aligned} 
    s^{\mathcal{D}} _{k,d}(n) & =\begin{cases}
      \frac{1}{\sqrt{N}}e^{2j\pi \left(f_{\min}T_d+\frac{2^{2\textit{s}}}{2^{SF}}(\frac{n-m_0}{2} +\frac{k}{2^s})\right)(n-m_0)\,+\, j\phi_d(n)} , \\
      \hfill \text{ if }0\le \textit{n}-m_0 \le m_k,\\
      \frac{1}{\sqrt{N}}e^{2j\pi \left(f_{\min}T_d+\frac{2^{2s}}{2^{SF}}(\frac{n-m_0}{2}+\frac{k-2^{SF}}{2^s})\right)(n-m_0) \,+\, j\phi_d(n)}, \\
      \hfill \text{ if }m_k <  n-m_0 \le m_{\textit{SD}}.
    \end{cases}
  \end{aligned}
\end{equation}

\subsection{Satellite's Motion}
The satellite's motion geometry towards two ground devices is illustrated in Fig. \ref{orbit}, based on the model proposed in \cite{ali1998doppler}. We assume that the satellite has a prograde motion with a certain inclination angle, shown by the direction arrows of orbit in the figure. It should be noted that this model of satellite's motion is general and can be also applied to retrograde satellites or any inclination angle.

In the Earth Centered Earth Fixed (ECEF) coordinate system \cite{popescu2014pixel}, the satellite's orbit will not form a great circle due to Earth's self-rotation. But during the visibility window, we can assume the part of orbit as a great-circle arc, because the single visibility window in our scenario is very short compared to the orbit period, not to mention the shorter shared visibility window. For example, for a circular orbit with an altitude of $500$ km, the maximum visibility window duration is less than $13$ minutes, while the orbit period is about $1.57$ \rm{h} \cite{ali1998doppler}. More details about the visibility window will be introduced in the next section. 

In Fig. \ref{orbit}-(a), $O$ is the center of Earth, $A$ and $B$ are the stationary ground devices, $S$ is the satellite's location at time $t$, with $N$ as its subsatellite point, where the elevation angle of the ground device is $\theta(t)$. We define $S_{mj}$ ($j$ = $A$ or $B$) as the satellite's location where the corresponding ground device ($A$ or $B$) obtains its theoretical maximum elevation angle $\theta_{c\!j}$ ($j$ = $A$ or $B$), with $M_{j}$ ($j$ = $A$ or $B$) as its subsatellite point. We define $t_{c\!v\!j}$ ($j$ = $A$ or $B$) as the central visibility time when the satellite reaches location $S_{mj}$. Then we have $\theta(t_{c\!v\!A})=\theta_{c\!A}$ and $\theta(t_{c\!v\!B})=\theta_{c\!B}$. In addition, $\omega_S$ is the angular velocity of the satellite in Earth Centered Inertial (ECI) coordinate system \cite{popescu2014pixel}, $\omega_E$ is the Earth's self-rotation angular velocity, $i$ is the inclination angle which is the angle between satellite orbit plane and the equator plane, and $d$ is the distance between ground devices $A$ and $B$.

To make the following derivation clearer, we extract the main geometric relation of the satellite's motion towards the ground device and show it in Fig. \ref{orbit}-(b). This geometric relation is general as it can be applied to both $A$ and $B$, so we will not distinguish $A$ and $B$ anymore in the following derivation. To be concise, we use $G$ as the predefined $A$ or $B$, $M$ as the predefined $M_{j}$, $S_m$ as the predefined $S_{mj}$, $t_{cv}$ as the predefined $t_{c\!v\!j}$, and $\theta_c$ as the predefined $\theta_{c\!j}$. The derivations of the satellite's motion formula are given as follows.

As Fig. \ref{orbit}-(b) shows, the slant range $\rho(t)$, represented by $GS$, can be obtained by the following formula:
\begin{equation}
  \label{rho}
  \rho(t) = \sqrt{(R+H)^2+R^2-2R(R+H)\cos\gamma(t)},
\end{equation}
where $R$ is Earth's radius, $H$ is orbit altitude, and $\gamma(t)$ is the Earth's central angle between ground device and satellite. Based on the knowledge of trigonometric function, we can obtain the relation between the elevation angle and Earth's central angle, which is
\begin{equation}
  \label{gamma and elev}
  \cos\gamma(t) = \cos\left[\cos^{-1}\left (\frac{R}{R+H}\cos\theta(t)\right)-\theta(t)\right].
\end{equation}
Furthermore, it is important to notice that when the ground device gets its theoretical maximum elevation angle $\theta(t_{c\!v}) = \theta_c$, Plane $O\!G\!S_m$ will be perpendicular to Plane $O\!S\!S_m$\cite{ali1998doppler}, so the dihedral angle between them is $90$ degrees. Based on the cosine law of spherical triangle $M\!N\!G$, we get
\begin{equation}
  \label{gamma}
  \cos\gamma(t) = \cos\gamma(t_{c\!v})\cos\left(\varphi(t-t_{c\!v})\right),
\end{equation}
where $\varphi(t-t_{c\!v})$ represents the angle distance between $M$ and $N$ measured on the orbit plane. It is obtained by $\varphi(t-t_{c\!v}) = \omega_F(t-t_{c\!v})$. For a prograde satellite, considering the Earth's self-rotation, the angular velocity $\omega_F$ of the satellite in ECEF is
\begin{equation}
  \label{angular}
  \omega_F = \omega_S - \omega_E \cos i.
\end{equation}
The satellite angular velocity in ECI is expressed in the following lemma. After combining \eqref{rho} and \eqref{gamma}, we get the final expression for the slant range, whose derivative is the relative velocity shown in  \eqref{relative velocity}.

\begin{lemma}
The angular velocity of a satellite orbiting Earth in ECI at height \(H\) is:
\begin{equation}
    \label{Angular_velocity_ECI}
    \omega_S = \sqrt{\frac{gR^2}{(R + H)^3}},
\end{equation}
where \(g\) is gravitational acceleration, \(R\) is Earth's radius, and \(H\) is the satellite's orbit height.
\end{lemma}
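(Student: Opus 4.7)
The plan is to derive $\omega_S$ from first principles of Newtonian mechanics, using the assumption (already adopted in the system model) that the orbit is a circle of radius $R + H$ around Earth's center. For a satellite of mass $m$ on such an orbit, Newton's second law applied along the radial direction equates the gravitational pull exerted by the Earth to the centripetal force required to maintain uniform circular motion, and this single equation determines $\omega_S$ uniquely.

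First I would write down Newton's law of universal gravitation between the Earth (mass $M_E$) and the satellite, giving a radial force of magnitude $F_g = G M_E m / (R+H)^2$, where $G$ is the universal gravitational constant. Next I would write the centripetal force needed to sustain a circular trajectory at angular velocity $\omega_S$, namely $F_c = m \omega_S^2 (R+H)$. Equating $F_g = F_c$ and cancelling $m$ yields $\omega_S^2 = G M_E / (R+H)^3$.

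The final step is to eliminate the composite constant $G M_E$ in favor of the surface gravitational acceleration $g$, which is the parameter appearing in the lemma. Evaluating gravity at Earth's surface ($r = R$) gives $g = G M_E / R^2$, hence $G M_E = g R^2$. Substituting this back produces $\omega_S^2 = g R^2 / (R+H)^3$, and taking the positive square root yields the claimed expression.

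I do not expect any real obstacle in this derivation: it is a textbook consequence of Newtonian gravity, and the only implicit modelling assumptions (spherically symmetric Earth treated as a point mass, and a circular orbit) are precisely those already invoked in Section \ref{System model} via the Walker Constellation model. The only point that merits care is making explicit that $\omega_S$ refers to the ECI frame rather than ECEF, which is automatic here because the derivation is carried out in a non-rotating inertial frame; the rotating-frame correction producing $\omega_F = \omega_S - \omega_E \cos i$ has already been accounted for separately in \eqref{angular}.
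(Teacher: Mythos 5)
Your proposal is correct and follows essentially the same route as the paper: equate Newtonian gravitational attraction with the centripetal force for a circular orbit of radius $R+H$, then eliminate $GM$ via the surface relation $g = GM/R^2$. The only addition beyond the paper's proof is your explicit remark about the ECI versus ECEF frame, which is a harmless clarification.
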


\begin{proof}
Given the Newton's law of gravitation and centripetal force, we get
\begin{equation}
\frac{GMm}{(R+H)^2} =m\omega_S^2(R+H),\label{eqws}
\end{equation}
On the surface of the planet, the force between a mass \( m \) and a planet of mass \( M \) equals the gravitational force \(F = \frac{GMm}{R^2} = mg\). Therefore, we obtain $GM = gR^2$. By substituting it in \eqref{eqws}, we get \eqref{Angular_velocity_ECI}.

\end{proof}

\begin{figure*}
    \begin{equation}
    \label{relative velocity}
  v(t) = \dot{\rho} (t) = \frac{R(R+H)\sin\big(\varphi(t-t_{c\!v})\big)\big(\cos\gamma(t_{c\!v})\big) \dot{\varphi}(t)}{\sqrt{(R+H)^2+R^2-2R(R+H)\cos\big(\varphi(t-t_{c\!v})\big)\cos\gamma(t_{c\!v})}}.
    \end{equation}
    \begin{equation}
      \label{Doppler}
      f_d(t) = -\frac{f_k(t)}{c}\cdot\frac{R(R+H)\sin\big(\omega_F(t-t_{c\!v})\big)\cos\Big[\cos^{-1}\big (\frac{R}{R+H}\cos\theta_c\big)-\theta_c\Big]\omega_F}{\sqrt{(R+H)^2+R^2-2R(R+H)\cos\big(\omega_F(t-t_{c\!v})\big) \cos\Big[\cos^{-1}\big (\frac{R}{R+H}\cos\theta_c\big)-\theta_c\Big]}}.
    \end{equation}
    \hrulefill
\end{figure*}

\section{Doppler Shift Characterization} 
\label{Doppler_characteriztion}
In this section, we will derive the analytical expressions for the Doppler shift impacting the LoRa waveforms transmitted in our scenario. 

Indeed, the Doppler shift can be calculated as $f_d(t)=-f_k(t) \cdot(v(t)/c)$. Having \eqref{gamma and elev} and \eqref{relative velocity}, we can obtain the expression of the Doppler shift of the LoRa signal, as shown in \eqref{Doppler}. For simplicity, we only show the continuous-time formula here. From \eqref{Doppler}, we can see that the Doppler shift is a time-varying function, and it is impacted by the LoRa frequency function, orbit height, the satellite's angular velocity, and so on.

Meanwhile, we still need to find out when the satellite is visible to the ground device. In what follows, we will first define the single visibility window during which the satellite can communicate with one single given device. It is usually limited by elevation angle, satellite's velocity, etc. Then we will give the time range when the satellite can communicate with both two ground devices, namely the shared visibility window. We illustrate the single and shared visibility window for the ground devices $A$ and $B$. Then, we will introduce the Doppler linear approximation method which provides simplified and suitable expressions of Doppler shift for future cross-correlation derivation.

\subsection{Single Visibility Window}
As Fig. \ref{windows}-(a) shows, based on \eqref{gamma and elev}-\eqref{angular}, the single visibility window is expressed in the following lemma: 
\begin{lemma}\label{lemma1}
Due to the circular orbit, the single visibility window will be symmetric and is expressed as
\begin{equation}
  \label{single window}
  W = \big[t_{c\!v}-\Delta t_{\text{out}},t_{c\!v}-\Delta t_{\text{in}}\big]\cup \big[t_{c\!v}+\Delta t_{\text{in}},t_{c\!v}+\Delta t_{\text{out}}\big],
\end{equation}
where $\Delta t_{\text{out}}$ and $\Delta t_{\text{in}}$ are defined as
\begin{equation}
  \label{singlecalculation}
  \begin{aligned}
  &\Delta t_{\text{out/in}} = \\
  &\left | \frac{1}{\omega_F}\cos^{-1}\Big[\frac{\cos\Big(\cos^{-1}\big (\frac{R}{R+H}\cos\theta(t_{\text{out/in}})\big)-\theta(t_{\text{out/in}})\Big)}{\cos\Big(\cos^{-1}\big (\frac{R}{R+H}\cos\theta_c\big)-\theta_c\Big)}\Big] \right |.
  \end{aligned}
\end{equation}
Here $\theta(t_{\text{out}}) = \theta_{\min}$ and $\theta(t_{\text{in}}) = \theta_{\max}$ are the realistic minimum and maximum elevation angles respectively. 
\end{lemma}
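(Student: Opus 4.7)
The plan is to derive the boundary times of the visibility window by inverting the elevation/central-angle relation established in \eqref{gamma and elev}--\eqref{gamma}, and then to read off the claimed union structure directly from the even symmetry of $\cos(\omega_F(t-t_{cv}))$ in $t-t_{cv}$. First I would fix an arbitrary threshold elevation $\theta$ and seek all times $t$ with $\theta(t)=\theta$. Combining \eqref{gamma and elev} with \eqref{gamma} yields
\begin{equation*}
\cos\gamma(t_{cv})\cos\bigl(\omega_F(t-t_{cv})\bigr)\;=\;\cos\!\Bigl[\cos^{-1}\!\bigl(\tfrac{R}{R+H}\cos\theta\bigr)-\theta\Bigr],
\end{equation*}
which can be solved for $t-t_{cv}$ as an $\arccos$, producing a $\pm$ pair of roots that are symmetric about $t_{cv}$; this is the mechanism that delivers the overall symmetry asserted in the statement.

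Next I would evaluate $\cos\gamma(t_{cv})$ using \eqref{gamma and elev} applied at $t=t_{cv}$, which gives $\cos\gamma(t_{cv})=\cos[\cos^{-1}(\tfrac{R}{R+H}\cos\theta_c)-\theta_c]$. Substituting into the displayed equation above reproduces exactly the $\arccos$ expression inside the absolute value in \eqref{singlecalculation}, once one sets $\theta=\theta(t_{\text{out}})=\theta_{\min}$ to obtain $\Delta t_{\text{out}}$ and $\theta=\theta(t_{\text{in}})=\theta_{\max}$ to obtain $\Delta t_{\text{in}}$. The absolute value absorbs the $\pm$ sign choice from the $\arccos$ inversion so that $\Delta t_{\text{in}},\Delta t_{\text{out}}\ge 0$.

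Finally I would assemble the window shape. Because $\theta(t)$ is monotonically increasing on $[t_{cv}-\Delta t_{\text{out}},t_{cv}]$ and monotonically decreasing on $[t_{cv},t_{cv}+\Delta t_{\text{out}}]$ (this follows from the fact that $\cos(\omega_F(t-t_{cv}))$ is strictly decreasing in $|t-t_{cv}|$ on the relevant interval, combined with the monotonicity of the map in \eqref{gamma and elev}), the satellite's elevation lies in the physical range $[\theta_{\min},\theta_{\max}]$ precisely on the two symmetric intervals $[t_{cv}-\Delta t_{\text{out}},t_{cv}-\Delta t_{\text{in}}]$ and $[t_{cv}+\Delta t_{\text{in}},t_{cv}+\Delta t_{\text{out}}]$, which is \eqref{single window}. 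The gap $(t_{cv}-\Delta t_{\text{in}},t_{cv}+\Delta t_{\text{in}})$ around $t_{cv}$ corresponds to $\theta(t)>\theta_{\max}$, i.e.\ elevations above the practical upper bound.

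The only non-routine step is justifying that the $\arccos$ inversion indeed captures \emph{all} crossings of $\theta(t)$ with $\theta_{\min}$ and $\theta_{\max}$ on the (short) visibility pass, and that no spurious roots arise from the periodicity of cosine. This reduces to noting that $|\omega_F(t-t_{cv})|<\pi/2$ throughout the pass, which follows from the great-circle-arc approximation already invoked in the paragraph preceding \eqref{rho} (the pass is far shorter than the orbital period, so $\varphi(t-t_{cv})$ stays well inside the principal branch of $\arccos$); this makes the inverse cosine single-valued and the two-interval form unambiguous.
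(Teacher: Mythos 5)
Your proposal is correct and follows essentially the same route as the paper: the paper's proof simply defines $t_{\text{out/in}} = t_{cv} \pm \Delta t_{\text{out/in}}$ via the realistic elevation limits $\theta_{\min}$ and $\theta_{\max}$ and inverts \eqref{gamma and elev} together with \eqref{gamma} to obtain \eqref{singlecalculation}. Your additional justifications (the even symmetry of $\cos(\omega_F(t-t_{cv}))$ giving the two-interval structure, the monotonicity of $\theta(t)$ on either side of $t_{cv}$, and the principal-branch argument for the $\arccos$ inversion) are details the paper leaves implicit, but they do not change the approach.
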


\begin{proof}
It is worth noting that the theoretical maximum elevation angle $\theta_c$ may not be supported by the antenna or other physical structures of a ground device. There may be some realistic elevation angle limits for the ground device, and we use $[\theta_{\text{min}},\theta_{\text{max}}]$ to represent them. Meanwhile, let us denote $t_{\text{out}}=t_{cv} \pm \Delta t_{\text{out}}$ and $t_{\text{in}}=t_{cv} \pm \Delta t_{\text{in}}$ such that $\theta(t_{\text{out}}) = \theta_{\text{min}}$, and $\theta( t_{\text{in}}) = \theta_{\text{max}}$. Using \eqref{gamma and elev} and \eqref{gamma},  we can obtain \eqref{singlecalculation}.  
\end{proof}

\begin{remark}If the theoretical maximum elevation angle is higher than the realistic maximum elevation angle, the single visibility window will consist of two parts. If they are the same, the single visibility window will be one whole part. 
\end{remark}

\begin{figure}[ht]
  \centering
  \includegraphics[width=8.5cm]{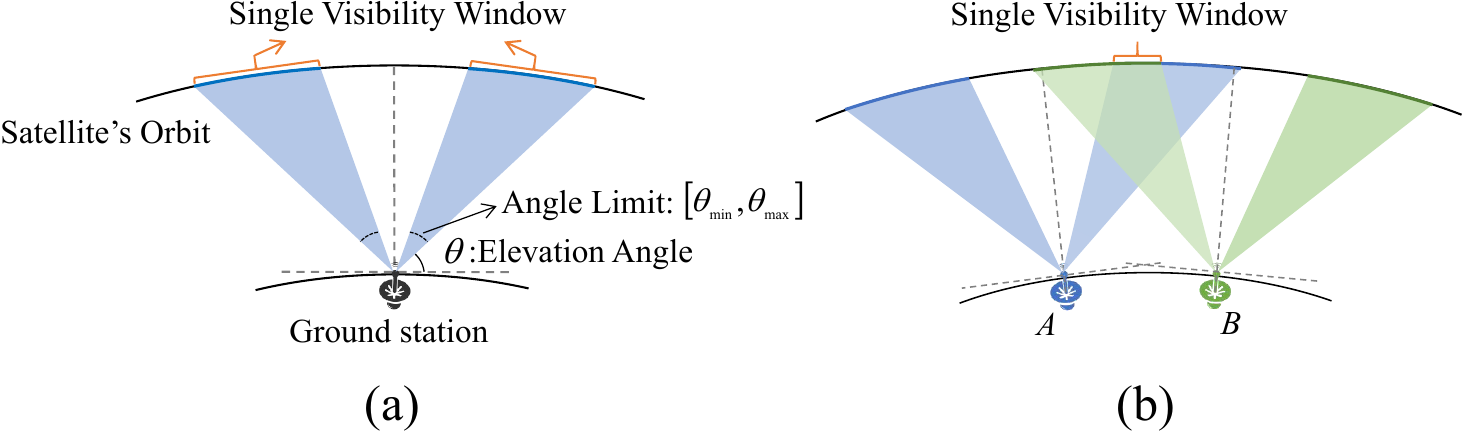}
  \caption{ Visibility window introduction: (a) Single visibility window;      (b) Shared visibility window.}
  \label{windows}
\end{figure}

Taking the ground device $A$ as the example, without loss of generality, we assume that $f_k(t) = f_c$, so that we can more clearly analyze the curves of Doppler shift and Doppler rate over the single visibility window, shown in Fig. \ref{shift_rate_window}. This figure is based on the following parameter setting: $i = 15^\circ$, $f_c = 868$ MHz, $B_1 = 250$ kHz, $H = 550$ km, the theoretical maximum elevation angle $\theta_{c\!A}=56^\circ$, and the realistic elevation angle limit $[\theta_{\min \!A}=10^\circ, \theta_{\max \!A}=50^\circ]$. 
\begin{figure}[ht]
  \centering
  \includegraphics[width=8cm]{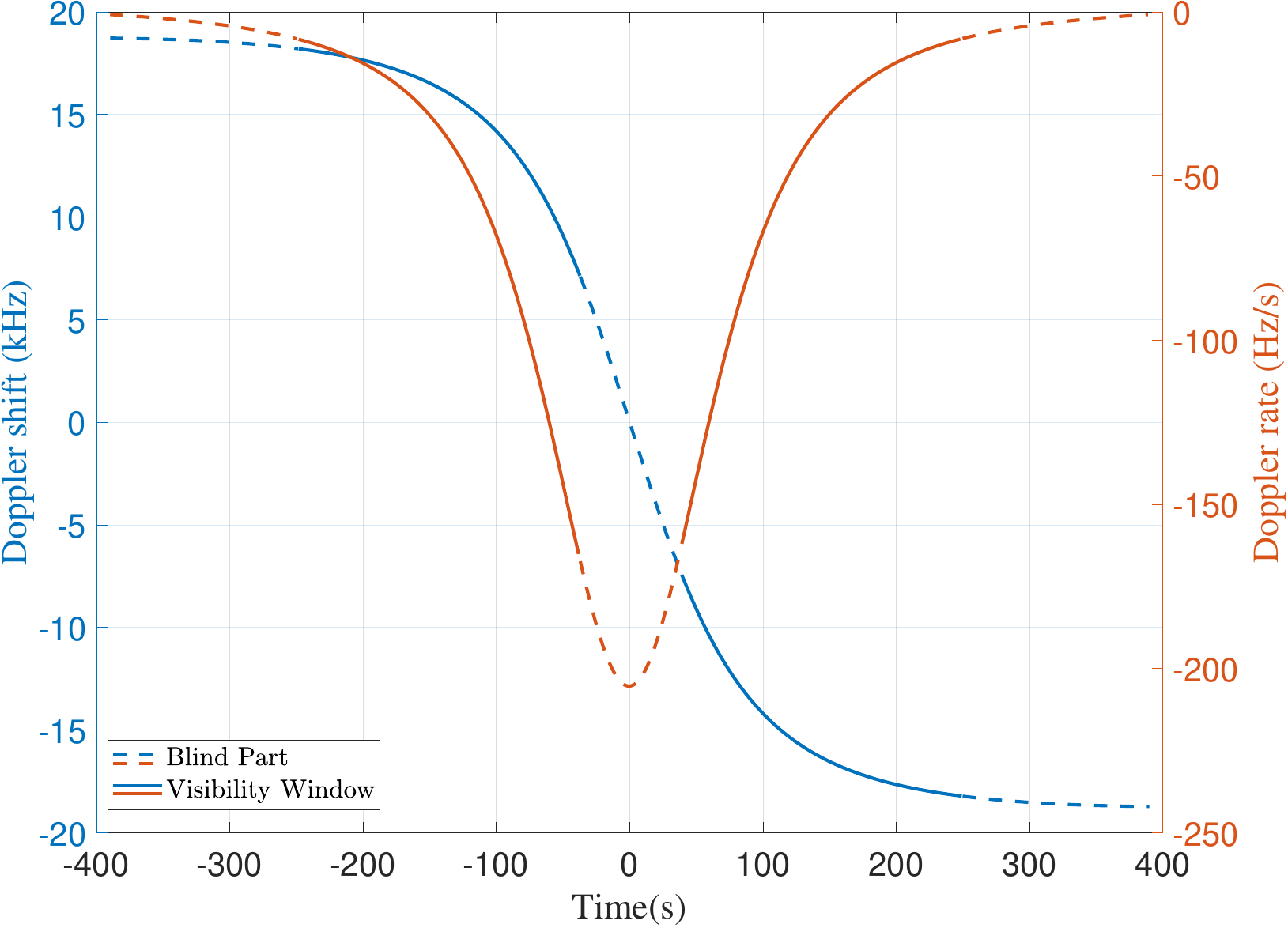}
  \caption{Simulation curve of Doppler shift and rate during the single visibility window, with carrier frequency 868 MHz and orbit height 550 km}
  \label{shift_rate_window}
\end{figure}

In Fig. \ref{shift_rate_window}, the solid lines represent the visibility window, and the dashed lines represent the blind part when no communication can be achieved between the ground device and the satellite. What's more, the blue and red lines represent the Doppler shift and Doppler rate respectively. The width of the visibility window (solid line) is about $422$ seconds, which is about $7$ minutes. Only considering the visibility window, for the Doppler shift, we can see that the absolute Doppler shift approximately ranges from $7.6$ kHz to $18.2$ kHz. For the Doppler rate, its absolute value ranges from $8.3$ Hz/s to $165.1$Hz/s. It should be noted that the Doppler shift cannot be $0$ Hz as the theoretical maximum elevation angle $\theta_{c\!A}=56^\circ$ cannot be achieved due to the realistic elevation angle limit. Meanwhile, based on Fig. \ref{shift_rate_window}, the transmission start time of the signal will determine the level of Doppler shift or Doppler rate it may experience.


\begin{remark} For Fig. \ref{shift_rate_window}, note that the theoretical maximum elevation angle determines how steep these two curves will be. The physical maximum and minimum elevation angles determine how wide the visibility window will be.
\end{remark}

\subsection{Shared Visibility Window}
When there are two ground devices, we need to express the shared visibility window for them, like Fig. \ref{windows}-(b) shows. To calculate it, we depict Fig. \ref{Twoplanes} to describe the geometry between two ground devices and one satellite. The blue ellipse denotes the Earth's equator plane, while the green ellipse denotes the orbit plane.
\begin{lemma}\label{lemma2}
Based on Lemma \ref{lemma1}, the shared visibility window is expressed as 
\begin{equation}
    \label{shared_window}
    W_{sh} = W_A \cap W_B,
\end{equation}
where $W_A$ and $W_B$ are the single visibility windows of devices $A$ and $B$, respectively.
\end{lemma}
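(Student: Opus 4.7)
The plan is to reduce the statement essentially to its set-theoretic definition. First, I would recall from Lemma~\ref{lemma1} that for each ground device $j\in\{A,B\}$, the single visibility window $W_j$ is precisely the set of times $t$ at which the elevation angle $\theta_j(t)$ lies within the admissible range $[\theta_{\min j},\theta_{\max j}]$, so that line-of-sight communication between the satellite and device $j$ is geometrically possible.

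Next, I would formalize what ``shared visibility'' means: at time $t$ the satellite must simultaneously have line of sight to both $A$ and $B$. Since the two ground devices are stationary and their visibility conditions are evaluated at the same instant $t$, the two events are geometrically independent in the sense that both must hold individually. Hence the defining condition for $t \in W_{sh}$ is the conjunction $t\in W_A \text{ and } t\in W_B$, which is exactly $t\in W_A \cap W_B$. Substituting the explicit form from Lemma~\ref{lemma1} expresses $W_{sh}$ as the intersection of (at most) four closed intervals centered at the distinct central visibility times $t_{c\!v\!A}$ and $t_{c\!v\!B}$, whose widths are controlled by $\Delta t_{\text{out}j}$ and $\Delta t_{\text{in}j}$.

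The main obstacle, and really the only subtle point I would flag, is that the intersection is not guaranteed to be nonempty: if $|t_{c\!v\!A} - t_{c\!v\!B}|$ exceeds $\Delta t_{\text{out}A}+\Delta t_{\text{out}B}$, then $W_A$ and $W_B$ are disjoint and $W_{sh}=\emptyset$, meaning no simultaneous communication is possible during the pass. The lemma still holds, but to make it operationally meaningful I would add a short remark giving the geometric condition on $d$, $i$, and $H$ under which $W_{sh}$ is nonempty, using that $t_{c\!v\!A}$ and $t_{c\!v\!B}$ are determined by the relative sub-satellite geometry and $\omega_F$ derived in the previous subsection. With that caveat noted, the proof itself is a one-line set-equality argument.
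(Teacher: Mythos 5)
Your argument is correct and matches the paper's proof, which likewise just invokes Lemma~\ref{lemma1} to obtain $W_A$ and $W_B$ and takes their intersection as the defining condition for simultaneous visibility. Your extra caveat about $W_{sh}$ possibly being empty when $|t_{c\!v\!A}-t_{c\!v\!B}|$ is too large is a sensible addition but not needed for the claim itself.
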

\begin{proof}
    Applying Lemma \ref{lemma1}, we get the single visibility windows $W_A$ and $W_B$ for devices $A$ and $B$, respectively. The intersection of them will be the shared visibility window.
\end{proof}

\begin{figure}[ht]
  \centering
  \includegraphics[width=8.5cm]{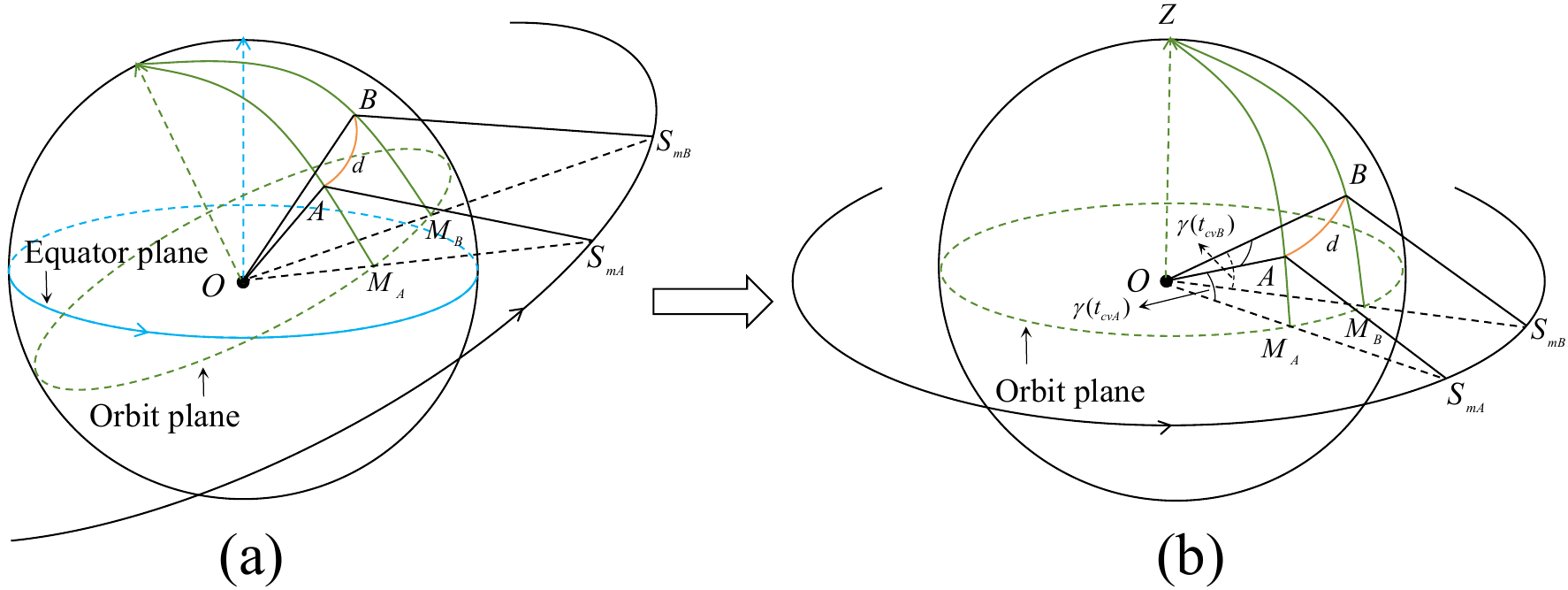}
  \caption{ Illustration about the base plane change: (a) Equator plane;      (b) Orbit plane.}
  \label{Twoplanes}
\end{figure}

In our scenario, about the relative location between two ground devices, we assume that we only know the theoretical and realistic elevation angle limits of them, as well as the distance $d$ between them. We assume the satellite passes $S_{m\!A}$ first, which means $t_{c\!v\!A}<t_{c\!v\!B}$, and we give $t_{c\!v\!B} = t_{c\!v\!A} + \Delta t_{A\!B}$. $\Delta t_{A\!B}$ is expressed in the following corollary:
\begin{corollary}
    \begin{equation}
  \label{time_distance}
  \Delta t_{A\!B} = \frac{1}{\omega_F}\cos^{-1}\Big[\frac{\cos(d/R)-\sin\gamma(t_{c\!v\!B})\sin\gamma(t_{c\!v\!A})}{\cos\gamma(t_{c\!v\!B})\cos\gamma(t_{c\!v\!A})}\Big].
\end{equation}
\end{corollary}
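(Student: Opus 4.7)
The plan is to reduce the problem to a spherical‐geometry calculation on the unit sphere, using the fact (already invoked in the derivation of \eqref{gamma}) that during one shared visibility window the ground track of the satellite in ECEF can be treated as an arc of a great circle traversed with ECEF angular velocity $\omega_F$. Under this approximation, $M_A$ and $M_B$ are the feet of the perpendiculars dropped from $A$ and $B$ onto this great circle, and the arcs $AM_A$, $BM_B$ meet the ground track at right angles. Moreover the angular measures of these two perpendicular arcs are exactly $\gamma(t_{c\!v\!A})$ and $\gamma(t_{c\!v\!B})$, because $\gamma(t)$ is by definition the Earth's central angle between the ground device and the instantaneous subsatellite point at time $t$.

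\textbf{Step 1: coordinates.} I would introduce spherical coordinates on the unit sphere such that the (great‑circle) ground track plays the role of the equator and $M_A$ sits at the origin. Assuming, without loss of generality, that $A$ and $B$ lie on the same side of the ground track (the opposite case only changes signs and is absorbed in the arccosine), one gets in Cartesian form
\begin{equation*}
A=(\cos\gamma_A,\,0,\,\sin\gamma_A), \qquad
M_B=(\cos\alpha,\,\sin\alpha,\,0),
\end{equation*}
\begin{equation*}
B=(\cos\gamma_B\cos\alpha,\,\cos\gamma_B\sin\alpha,\,\sin\gamma_B),
\end{equation*}
where I write $\gamma_A=\gamma(t_{c\!v\!A})$, $\gamma_B=\gamma(t_{c\!v\!B})$, and $\alpha$ for the unknown angular separation $\widehat{M_AM_B}$ along the ground track.

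\textbf{Step 2: relate $\alpha$ to $d$.} The ground‐device separation $d$ is a great‐circle arc on Earth, so the corresponding central angle is $d/R$, giving $\cos(d/R)=A\cdot B$. Computing this dot product yields
\begin{equation*}
\cos(d/R)=\cos\gamma_A\cos\gamma_B\cos\alpha+\sin\gamma_A\sin\gamma_B,
\end{equation*}
which rearranges immediately to the bracketed expression in \eqref{time_distance}.

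\textbf{Step 3: convert arc to time.} Because the satellite moves along its ground track in ECEF with angular velocity $\omega_F$ given by \eqref{angular}, and because the satellite is at $M_A$ at $t_{c\!v\!A}$ and at $M_B$ at $t_{c\!v\!B}=t_{c\!v\!A}+\Delta t_{A\!B}$, we have $\alpha=\omega_F\,\Delta t_{A\!B}$. Solving for $\Delta t_{A\!B}$ produces \eqref{time_distance}.

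The main obstacle, as I see it, is not the trigonometry but the justification of the geometric setup: namely, that within one shared visibility window the ECEF ground track may be treated as a great circle (already argued in Section~\ref{System model} via the short duration of the window relative to the orbital period), and that $AM_A$ and $BM_B$ are genuinely perpendicular to that great circle at $M_A$ and $M_B$. The latter is exactly the perpendicularity of planes $O G S_m$ and $O S S_m$ used to derive \eqref{gamma}, applied once to $A$ and once to $B$, so it can be imported directly rather than rederived.
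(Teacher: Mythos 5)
Your proposal is correct and follows essentially the same route as the paper: both treat the orbit ground track as a great circle, take its pole as a new ``north pole'' so that $\gamma(t_{c\!v\!A})$ and $\gamma(t_{c\!v\!B})$ become the latitudes of $A$ and $B$, and then relate $d/R$ to the longitude separation $\alpha=\omega_F\Delta t_{A\!B}$. Your explicit dot-product computation in Step~2 is just the spherical law of cosines that the paper invokes directly on the triangle with vertices $Z$, $A$, $B$, so the two arguments coincide.
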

\begin{proof}
    If we rotate the whole sphere body of Fig. \ref{Twoplanes}-(a), letting the orbit plane be the base plane, we get Fig. \ref{Twoplanes}-(b). In this new sphere, the new north pole is denoted by $Z$, and the relative spherical distance between $A$ and $B$ is $d$, which is also the defined ground device distance, represented by the orange line. If we define the new longitude and latitude in this figure, like what we have for Earth, we will find the longitude is the angular distance that the satellite has moved with $\omega_F$, and the latitude can perfectly be denoted by $\gamma(t)$. Surprisingly, if the ground devices have the same theoretical maximum elevation angle, they will share the same absolute new latitude too. Therefore, for $A$ and $B$, their new latitude is $\gamma(t_{c\!v\!A})$ and $\gamma(t_{c\!v\!B})$ respectively, which can be directly calculated from \eqref{gamma and elev} using their theoretical maximum elevation angle. 
Then, in the body of $ZOAB$, we can calculate the final time distance from $\Delta t_{A\!B}=\angle POQ/\omega_F$. Here $\angle POQ$ is equal to the dihedral angle between Plane $AOZ$ and Plane $BOZ$. In addition, $\angle BOA = d/R$, $\angle BOZ = \pi/2-\gamma(t_{c\!v\!B})$, and $\angle AOZ = \pi/2-\gamma(t_{c\!v\!A})$. According to the cosine law of a spherical triangle, we have the following formula:
\begin{equation}
  \label{time_distance_calculation}
  \begin{aligned}
\cos\angle BOA= \cos\angle BOW \cos\angle AOW + \\
\hfill \sin\angle BOW \sin\angle AOW \cos\angle MON.
\end{aligned}
\end{equation}
From $\Delta t_{A\!B} = \angle BOA / \omega_F$, we can finally obtain \eqref{time_distance}.
\end{proof}

\subsection{Doppler Linear Approximation} \label{Doppler_linear_approximation}
As the derived formula of the Doppler shift in \eqref{Doppler} is too complicated to be used in the derivation of the cross-correlation function, we aim to simplify the Doppler shift function with a linear approximation method that was previously used in \cite{ben2022new,cao2021influence}. 
The approximation is applied for the Doppler shift within one symbol period. Thereby, with the nature of LoRa chirps \cite{benkhelifa2022orthogonal}, we can approximate the Doppler shift function by a piecewise linear function based on the original function in \eqref{Doppler}. Both continuous-time domain and discrete-time domain formulas are given below.

\textit{Continuous-time Domain:}
In the continuous-time domain, the linear Doppler shift function is defined as
\begin{equation}
  \label{approximation-con}
  \begin{aligned} 
    f_{dl}(t)  & =\begin{cases}
      c_d (t-t_0)+ v_d, &\text{ if $0\le t-t_0 \le t_k$ },\\
      c_d (t-t_0)+ v_d -\Delta f_k, & \text{ if $t_k < t-t_0 \le T_s$},
    \end{cases}
  \end{aligned}
\end{equation}
where $v_d$ is the initial Doppler shift defined as $v_d = f_d(t_0)$, $\Delta f_k = (f_{\min}+B)v(t_0+t_k)/c - f_{\min}v(t_0+t_k)/c = Bv(t_0+t_k)/c$, where $v(t_0+t_k)$ is the relative velocity at time instant $t_0+t_k$, and $c_d$ is the Doppler rate defined as $c_d = \big[f_d(t_0+T)-f_d(t_0)\big]/T$, which represents the slope of the Doppler shift curve.

\textit{Discrete-time Domain:}
Similarly, in the discrete-time domain, the linear Doppler shift function is defined as
\begin{equation}
  \label{approximation-dis}
  \begin{aligned} 
    f_{dl}(n)  & \!=\!\begin{cases}
      c_d (n\!-\!m_0)\!+\! v_d, &\text{if $0\!\le\! n\!-\!m_0 \!\le\! m_k$ },\\
      c_d (n\!-\!m_0)\!+\! v_d \!-\!\Delta f_k, & \text{if $m_k \!<\! n\!-\!m_0 \!\le\! m_{\textit{SD}}$},
    \end{cases}
  \end{aligned}
\end{equation}
where $v_d = f_d(m_0)$, $\Delta f_k = Bv(m_0+m_k)/c$, where $v(m_0+m_k)$ is the relative velocity at time instant $m_0+m_k$, $c_d = \Delta f_d / T_d$, and $\Delta f_d $ can be obtained from \eqref{Delta fd}
\begin{equation}
  \label{Delta fd}
  \begin{aligned} 
    \Delta f_d  & =\begin{cases}
      f_d(m_0+m_k)-f_d(m_0+m_k-1), \\
\hfill \text{ if $m_k+1 \geq m_{\textit{SD}}-m_k$ },\\
      f_d(m_0+m_k+2)-f_d(m_0+m_k+1),\\
\hfill \text{ if $m_k+1 < m_{\textit{SD}}-m_k$ }. 
    \end{cases}
  \end{aligned}
\end{equation}

To validate this type of approximation, we run some simulations which demonstrate the accurate estimation of the Doppler shift. The Doppler shift approximation matches perfectly with the analytical formula of Doppler shift in \eqref{Doppler}, and the mean absolute error (MAE) is very small, which is less than  $10^{-7}$ for the whole symbol period. 

Fig. \ref{approximation} shows an illustrative example of approximation results for one symbol period which demonstrates the accuracy of the approximation, where the blue line denotes the analytical Doppler shift, and the red circle denotes the approximated Doppler shift. 
Therefore, it is reasonable to use this linear function of the Doppler shift.
\begin{figure}[ht]
  \centering
  \includegraphics[width=8cm]{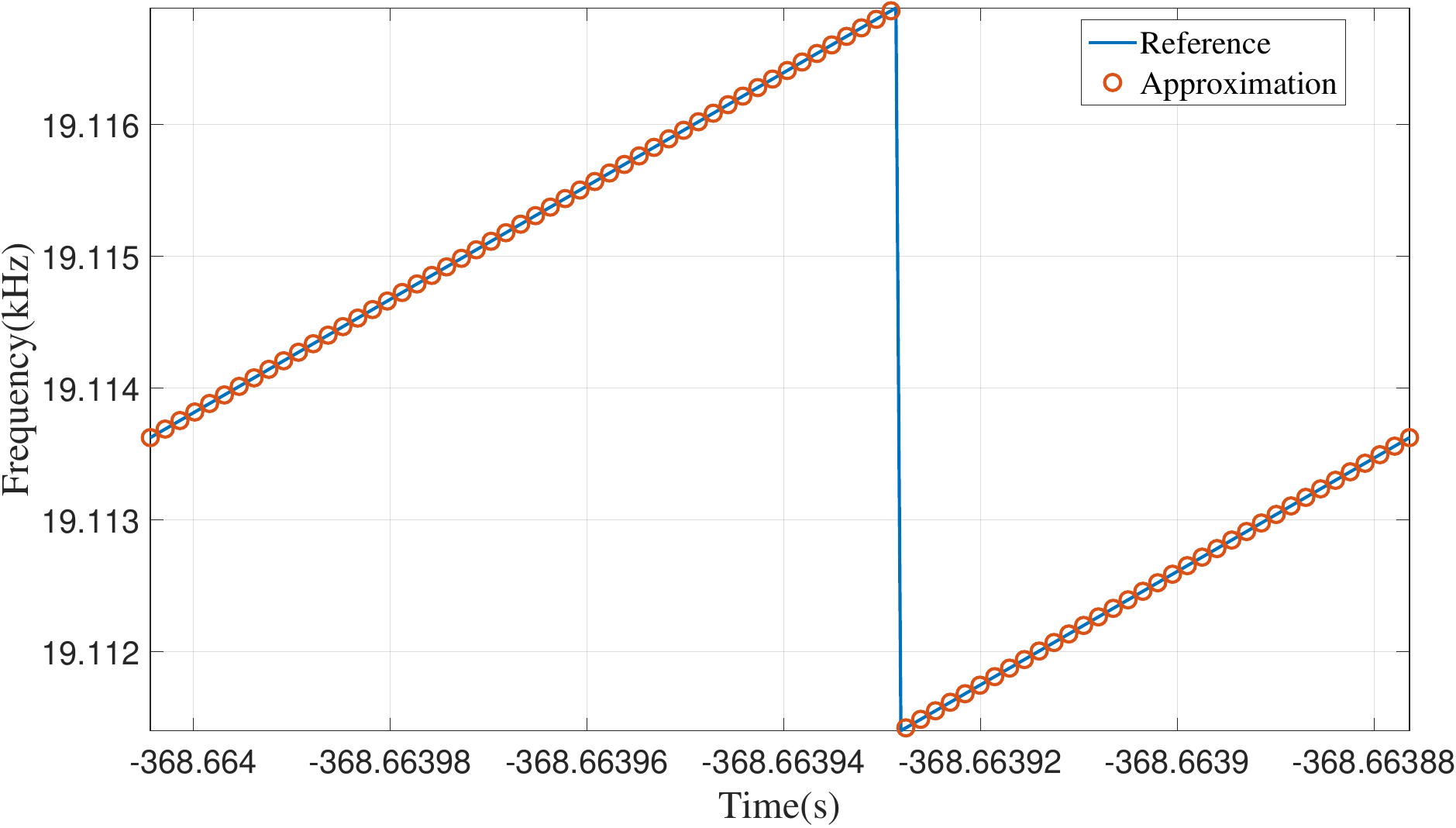}
  \caption{An illustrative example of the validation of the piecewise linear approximation for Doppler shift
  }
  \label{approximation}
\end{figure}
\section{Cross-Correlation Functions}\label{Cross-correlation_functions}
In this section, we will derive our analytical cross-correlation function between the signals received by a moving satellite from two ground devices $A$ and $B$ for both the continuous-time and discrete-time domains. Then some special cases for the cross-correlation function will be introduced. In addition, it is important to note that the following cross-correlation formulas are derived using the Doppler shift approximation formulas in Section \ref{Doppler_characteriztion}-\ref{Doppler_linear_approximation}.

For simplicity, in our following equations, all of the parameters belonging to ground device $A$ will be given subscript ‘$1$’, and those belonging to ground device $B$ will be given subscript ‘$2$’. For example, the bandwidth $B$ for ground device A is $B_1$, and for ground device B is $B_2$.
\vspace{-1pt}
\subsection{Continuous-Time Domain}
{To analyze the orthogonality of two LoRa waveforms in the presence of Doppler shift, by combining \eqref{lora-con-wave} and \eqref{approximation-con}, we firstly derive the final Doppler-impacted waveform expression in the continuous-time domain, shown in \eqref{final waveform-con}. }
\begin{equation}
  \label{final waveform-con}
  \begin{aligned} 
    &s^\mathcal{C} _{k,d}(t) =\\ &\begin{cases}
      \frac{1}{\sqrt{T_s}}e^{\left[2j\pi \left( f_{\text{min}}+v_d+\frac{B}{2^{SF}}(\frac{t-t_0}{2T} +k)+\frac{c_d(t-t_0)}{2} \right)(t-t_0)\right]},\\
      \hfill \text{ if $0\!\le\! t\!-\!t_0 \!\le\! t_k$ },\\
      \frac{1}{\sqrt{T_s}}e^{\left[2j\pi \big( f_{\text{min}}+v_d+\frac{B}{2^{SF}}(\frac{t-t_0}{2T} +k-2^{SF})+\frac{c_d(t-t_0)}{2} -\Delta f_k \big)(t-t_0)\right]},\\
      \hfill \text{ if $t_k\!<\!  t\!-\!t_0 \!\le\! T_s$}.
    \end{cases}
  \end{aligned}
\end{equation}
Then the cross-correlation function is defined as
\begin{equation}
  \label{cross-con}
  \mathcal{R}^\mathcal{C} _{k_1,k_2}(\tau,D_f,D_d)=\int_{-\infty}^{+\infty} s^\mathcal{C}_{k_1,d_1}(t+\tau)\Big(s^\mathcal{C}_{k_2,d_2}(t)\Big)^*\, dt,  
\end{equation}

where $s^\mathcal{C}_{k_1,d_1}(t)$ and $s^\mathcal{C}_{k_2,d_2}$ are the final waveform expressions for two LoRa signals, $\tau$ is the time delay, $D_f$ is the difference between the starting frequencies (DFS), i.e. $D_f = f_{\text{min}1}-f_{\text{min}2}$, and $D_d$ is the differential Doppler shift (DDS), and $D_d =v_{d_1} - v_{d_2}$.

As the waveform function in \eqref{final waveform-con} has two cases, we use $s^\mathcal{C} _{k,d}(t)[1]$ or $s^\mathcal{C} _{k,d}(t)[2]$ to denote the first or the second part of it respectively.
When doing the integral in that cross-correlation function, the integrand will have four types. 
For simplicity, we use $\mathcal{I}[i,j]$ to represent $s^\mathcal{C}_{k_1,d_1}(t+\tau)[i]\Big(s^\mathcal{C}_{k_2,d_2}(t)[j]\Big)^*$. From \eqref{cross-con}, we list all types of integrands below.
\begin{equation}
  \label{four types-con}
  \begin{aligned}
    \begin{cases}
      \mathcal{I}[1,1]
           \!=\! \frac{\mu}{\sqrt{T_{s1}T_{s2}}}e^{2j\pi\Big(y(t-t_0)+z\Big)(t-t_0)},\\
      \mathcal{I}[1,2]
           \!=\! \frac{\mu}{\sqrt{T_{s1}T_{s2}}}e^{2j\pi\Big(y(t-t_0)+z+\Delta f_{k2}+B_2\Big)(t-t_0)},\\
      \mathcal{I}[2,1]
           \!=\! \frac{\delta}{\sqrt{T_{s1}T_{s2}}}e^{2j\pi\Big(y(t-t_0)+z-\Delta f_{k1}-B_1\Big)(t-t_0)},\\
      \mathcal{I}[2,2]
           \!=\! \frac{\delta}{\sqrt{T_{s1}T_{s2}}}e^{2j\pi\Big(y(t-t_0)+z-\Delta f_{k1}+\Delta f_{k2}\!-\!B_1\!+\!B_2\Big)(t-t_0)}.
      \end{cases}      
  \end{aligned}
\end{equation}
where $y = \frac{1}{2}\Big(\frac{B_1}{2^{SF_1}T_1}-\frac{B_2}{2^{SF_2}T_2}\Big)  + \frac{c_{d1}-c_{d2}}{2}$, $f_j = f_{\text{min}j} + B_j  \frac{k_j}{2^{SF_j}} $, $\mu = \exp\Big[2j\pi \big(f_1+\frac{B_1}{ 2^{SF_1}}\frac{\tau}{2T_1}+\frac{c_{d1}\tau}{2}+v_{d1}\big)\tau\Big]$, $\delta = \exp\Big[2j\pi \big(f_1+\frac{B_1}{ 2^{SF_1}}(\frac{\tau}{2T_1}-2^{SF_1})+\frac{c_{d1}\tau}{2}+v_{d1}-\Delta f_{k_1}\big)\tau\Big]$, $z = D_f +D_d+h_{\tau}$, and $h_{\tau} = \tau\Big(\frac{B_1}{2^{SF_1}T_1}+c_{d1}\Big)$.

The two waveforms may have different mismatches in the time domain, resulting in various integral combinations in cross-correlation function. Let us denote $t_{\textit{a}} = t_0+\tau$, $t_{\textit{b}} = t_0+t_{k1}+\tau$, $t_{\textit{c}} = t_0+T_{s1}+\tau$, while $t_{\textit{x}} = t_0$, $t_{\textit{y}} = t_0+t_{k2}$, and $t_{\textit{z}} = t_0+T_{s2}$. Furthermore, we denote by $t_{\textit{be}}=\max\{t_{\textit{a}},t_{\textit{x}}\}$, and $t_{\textit{en}}=\min\{t_{\textit{c}},t_{\textit{z}}\}$. Subsequently, we have listed all of the possible cases for the final analytical expression of the cross-correlation function in Table \ref{Integral ranges}. 

\begin{table}[ht]
  \scriptsize
  \centering
  \caption{Cross-Correlation Functions in Continuous Time Domain}
  \label{Integral ranges}
  \begin{tabular}{lll}
  \toprule
    Type & Cross-correlation & Condition                     \\ \hline
    L1   & $\int_{t_{\textit{be}}}^{t_{\textit{en}}}\mathcal{I}[2,1]\,dt$                                                                 
    &  $t_b\!\le\! t_x\!\le\! t_c\!\le\! t_y$ \\
    L2   & $\int_{t_{\textit{be}}}^{t_{\textit{en}}}\mathcal{I}[1,2]\,dt$                                                                 
    &  $t_y\!\le\! t_a\!\le\! t_z\!\le\! t_b$ \\
    L3   & $\int_{t_{\textit{be}}}^{t_{b}}\mathcal{I}[1,1]\,dt\!+\!\int_{t_{b}}^{t_{\textit{en}}}\mathcal{I}[2,1]\,dt$                                 &  $t_x\!\le\! t_b\!\le\! t_c\!\le\! t_y$ \\
    L4   & $\int_{t_{\textit{be}}}^{t_{b}}\mathcal{I}[1,2]\,dt\!+\!\int_{t_{b}}^{t_{\textit{en}}}\mathcal{I}[2,2]\,dt$                                 &  $t_y\!\le\! t_a\!\le\! t_b\!\le\! t_z$ \\
    L5   & $\int_{t_{\textit{be}}}^{t_{y}}\mathcal{I}[1,1]\,dt\!+\!\int_{t_{y}}^{t_{\textit{en}}}\mathcal{I}[1,2]\,dt$                                 &  $t_a\!\le\! t_y\!\le\! t_z\!\le\! t_b$ \\
    L6   & $\int_{t_{\textit{be}}}^{t_{y}}\mathcal{I}[2,1]\,dt\!+\!\int_{t_{y}}^{t_{\textit{en}}}\mathcal{I}[2,2]\,dt$                                 &  $t_b\!\le\! t_x\!\le\! t_y\!\le\! t_c$ \\
    L7   & $\int_{t_{\textit{be}}}^{t_{b}}\mathcal{I}[1,1]\,dt\!+\!\int_{t_{b}}^{t_{y}}\mathcal{I}[2,1]\,dt\!+\!\int_{t_{y}}^{t_{\textit{en}}}\mathcal{I}[2,2]\,dt$ & $t_x\!\le\! t_b\!\le\! t_y\!\le\! t_c$ \\
    L8   & $\int_{t_{\textit{be}}}^{t_{y}}\mathcal{I}[1,1]\,dt\!+\!\int_{t_{y}}^{t_{b}}\mathcal{I}[1,2]\,dt\!+\!\int_{t_{b}}^{t_{\textit{en}}}\mathcal{I}[2,2]\,dt$ & $t_a\!\le\! t_y\!\le\! t_b\!\le\! t_z$ \\ 
  \bottomrule
  \end{tabular}
\end{table}
\begin{table}[ht]
  \centering
  \scriptsize
  \caption{Cross-Correlation Functions in Discrete Time Domain}
  \label{Summation ranges}
  \begin{tabular}{lll}
  \toprule
    Type & Cross-correlation & Condition                     \\ \hline
    L1   & $\sum_{m_{\textit{be}}}^{m_{\textit{en}}}\mathcal{S}[2,1]$                                                                 
    &  $m_{\textit{b}}\!\le\! m_{\textit{x}}\!\le\! m_{\textit{c}}\!\le\! m_{\textit{y}}$ \\
    L2   & $\sum_{m_{\textit{be}}}^{m_{\textit{en}}}\mathcal{S}[1,2]$                                                                 
    &  $m_{\textit{y}}\!\le\! m_{\textit{a}}\!\le\! m_{\textit{z}}\!\le\! m_{\textit{b}}$ \\
    L3   & $\sum_{m_{\textit{be}}}^{m_{\textit{b}}-1}\mathcal{S}[1,1]\!+\!\sum_{m_{\textit{b}}}^{m_{\textit{en}}}\mathcal{S}[2,1]$                                 &  $m_{\textit{x}}\!\le\! m_{\textit{b}}\!\le\! m_{\textit{c}}\!\le\! m_{\textit{y}}$ \\
    L4   & $\sum_{m_{\textit{be}}}^{m_{\textit{b}}-1}\mathcal{S}[1,2]\!+\!\sum_{m_{\textit{b}}}^{m_{\textit{en}}}\mathcal{S}[2,2]$                                 &  $m_{\textit{y}}\!\le\! m_{\textit{a}}\!\le\! m_{\textit{b}}\!\le\! m_{\textit{z}}$ \\
    L5   & $\sum_{m_{\textit{be}}}^{m_{\textit{y}}-1}\mathcal{S}[1,1]\!+\!\sum_{m_{\textit{y}}}^{m_{\textit{en}}}\mathcal{S}[1,2]$                                 &  $m_{\textit{a}}\!\le\! m_{\textit{y}}\!\le\! m_{\textit{z}}\!\le\! m_{\textit{b}}$ \\
    L6   & $\sum_{m_{\textit{be}}}^{m_{\textit{y}}-1}\mathcal{S}[2,1]\!+\!\sum_{m_{\textit{y}}}^{m_{\textit{en}}}\mathcal{S}[2,2]$                                 &  $m_{\textit{b}}\!\le\! m_{\textit{x}}\!\le\! m_{\textit{y}}\!\le\! m_{\textit{c}}$ \\
    L7   & $\sum_{m_{\textit{be}}}^{m_{\textit{b}}-1}\mathcal{S}[1,1]\!+\!\sum_{m_{\textit{b}}}^{m_{\textit{y}}}\mathcal{S}[2,1]\!+\!\sum_{m_{\textit{y}}+1}^{m_{\textit{en}}}\mathcal{S}[2,2]$ &  $m_{\textit{x}}\!\le\! m_{\textit{b}}\!\le\! m_{\textit{y}}\!\le\! m_{\textit{c}}$ \\
    L8   & $\sum_{m_{\textit{be}}}^{m_{\textit{y}}-1}\mathcal{S}[1,1]\!+\!\sum_{m_{\textit{y}}}^{m_{\textit{b}}}\mathcal{S}[1,2]\!+\!\sum_{m_{\textit{b}}+1}^{m_{\textit{en}}}\mathcal{S}[2,2]$ &  $m_{\textit{a}}\!\le\! m_{\textit{y}}\!\le\! m_{\textit{b}}\!\le\! m_{\textit{z}}$ \\ 
  \bottomrule
  \end{tabular}
\end{table}

\begin{remark}
If the time delay $\tau=0$, then $\mu = \delta = 1$, and $h_\tau=0$.
If they use the same carrier frequency, bandwidth, as well as the $SF$, i.e. $f_{c1} = f_{c2}$, $B_1 = B_2$, and $SF_1 = SF_2$, then $D_f=0$, $y = \frac{c_{d1}-c_{d2}}{2}$, and $z = D_d$. 
\end{remark}


\subsection{Discrete-Time Domain}
Similarly, by combining \eqref{lora-dis-wave} and \eqref{approximation-dis}, we derive the final Doppler-impacted waveform expression in the discrete-time domain, shown in \eqref{final waveform-dis}. 
\begin{equation}
  \label{final waveform-dis}
\begin{aligned}
&s^\mathcal{D} _{k,d}(n) \!=\!\\ &\left\{
\begin{array}{*{2}{ll}}
\frac{1}{\sqrt{N}}\exp \left. \Big[ 2j\pi\big(f_{\min}T_d\!+\!v_dT_d\!+\!\frac{2^{2s}}{2^{SF}}(\frac{n-m_0}{2} \!+\!\frac{k}{2^s} )\right.\\ 
\left. +\frac{c_dT_d^2(n-m_0)}{2}\big)(n\!-\!m_0)\Big]\right.,  \hfill \text{if $0\!\le\! n\!-\!m_0 \!\le\! m_k$},\\
\frac{1}{\sqrt{N}}\exp \left. \Big[ 2j\pi\big(f_{\min}T_d\!+\!v_dT_d\!+\!\frac{2^{2s}}{2^{SF}}(\frac{n-m_0}{2} \!+\!\frac{k-2^{SF}}{2^s} )\right.\\ 
\left. +\frac{c_dT_d^2(n-m_0)}{2}\!-\!\Delta f_kT_d\big)(n\!-\!m_0)\Big]\right., \hfill \text{if $m_k \!<\!  n\!-\!m_0 \!\le\! m_{\textit{SD}}$}.\\
\end{array} 
\right.
\end{aligned}
\end{equation}
Then the cross-correlation function is 
\begin{equation}
  \label{cross-dis}
  \mathcal{R}^\mathcal{D} _{k_1,k_2}(m_\tau,D_f,D_d)=\sum_{-\infty }^{+\infty} s^\mathcal{D}_{k_1,d_1}(n+m_\tau )\Big(s^\mathcal{D}_{k_2,d_2}(n)\Big)^*,  
\end{equation}
where $m_\tau = \tau/T_d$, and the other parameters are the same as defined in the continuous-time domain.

As the waveform function has two cases, we use $s^\mathcal{D} _{k,d}(n)[1]$ or $s^\mathcal{D} _{k,d}(n)[2]$ to denote the first or the second part of it respectively. 
When doing the summation in that cross-correlation function, the result may consist of different summations. We have listed all possible types of equations below. We use $\mathcal{S}[i,j]$ to represent $s^\mathcal{D}_{k_1,d_1}(n+m_\tau)[i]\Big(s^\mathcal{D}_{k_2,d_2}(n)[j]\Big)^*$. From \eqref{cross-dis}, the analytical expressions will be 
\begin{equation}
  \label{four types}
  \begin{aligned}
    \begin{cases}
      \mathcal{S}[1,1]
           = \frac{\xi}{\sqrt{N_1 N_2}}e^{2j\pi\Big(y(n-m_0)+z\Big)(n-m_0)},\\
      \mathcal{S}[1,2]
           = \frac{\xi}{\sqrt{N_1 N_2}}e^{2j\pi\Big(y(n-m_0)+z+2^{s2}+\Delta f_{k2}T_d\Big)(n-m_0)},\\
      \mathcal{S}[2,1]
           = \frac{\gamma}{\sqrt{N_1 N_2}}e^{2j\pi\Big(y(n-m_0)+z-2^{s1}-\Delta f_{k1}T_d\Big)(n-m_0)},\\
      \mathcal{S}[2,2] = \\
            \frac{\gamma}{\sqrt{N_1 N_2}}e^{2j\pi\Big(y(n-m_0)+z-2^{s1}+2^{s2}-\Delta f_{k1}T_d+\Delta f_{k2}T_d\Big)(n-m_0)}.\\
      \end{cases}      
  \end{aligned}
\end{equation}
where $y \!=\! \frac{1}{2}\Big(\frac{2^{2s_1}}{2^{SF_1}}-\frac{2^{2s_2}}{2^{SF_2}}\Big) +\frac{(c_{d1}-c_{d2})T_d^2}{2}$, $\xi \!=\! \exp\Big[2j\pi \big(f_{\min1}T_d+\frac{2^{2s_1}}{ 2^{SF_1}}(\frac{m_{\tau}}{2}+\frac{k_1}{2^{s_1}})+\frac{c_{d_1}m_{\tau}T_d^2}{2}+v_{d_1}T_d\big)m_\tau\Big]$, $\gamma = \exp\Big[2j\pi \big(f_{\min1}T_d+\frac{2^{2s_1}}{ 2^{SF_1}}(\frac{m_{\tau}}{2}+\frac{k_1-2^{SF}}{2^{s_1}})+\frac{c_{d_1}m_{\tau}T_d^2}{2}+v_{d_1}T_d-\Delta f_{k_1}T_d\big) m_\tau\Big]$, $z = D_f T_d  +D_d T_d+h_{\tau}$, and $h_{\tau} = m_\tau(\frac{2^{2s_1}}{2^{SF_1}} +c_{d_1} T_d^2)$.

We define $m_{\textit{a}} = m_0+m_\tau$, $m_{\textit{b}} = m_0+m_{k1}+m_\tau$, $m_{\textit{c}} = m_0+m_{\textit{SD}1}+m_\tau$, while $m_{\textit{x}} = m_0$, $m_{\textit{y}} = m_0+m_{k2}$, and $m_{\textit{z}} = m_0+m_{\textit{SD}2}$. Then we let $m_{\textit{be}}=\max\{m_{\textit{a}},m_{\textit{x}}\}$, and $m_{\textit{en}} = \min\{m_{\textit{c}},m_{\textit{z}}\}$. Similarly, we have also listed all of the possible cases for the final analytical expression of the cross-correlation function in the discrete case in Table \ref{Summation ranges}. 

\begin{remark}
If the time delay $m_\tau=0$, then $\xi =1$, $\gamma = 1$, and $h_\tau=0$. Furthermore, assuming they use the same carrier frequency $f_c$, if $B_1 = B_2$, $s_1 = s_2$, and $SF_1 = SF_2$, then $D_f=0$, $y = \frac{(c_{d1}-c_{d2})T_d^2}{2}$, and $z = D_dT_d$.
\end{remark}

\section{Numerical Results} \label{Numerical_results}
In this section, we aim to characterize the cross-correlation factors in both continuous and discrete time domains, focusing on mean and maximum cross-correlations across all possible symbols. 

What's more, we will further compare these results between the no-Doppler and with-Doppler cases. Specifically, for the case with Doppler effects, we will conduct a comprehensive analysis of the impact of the Doppler shift on both the LoRa signal and cross-correlation.

\subsection{Parameter Setting}\label{para_setting}
Firstly, we establish the following parameters: the speed of light $c = 3\times 10^8\,\rm{m/s}$, gravitational acceleration $g = 9.8\,\rm{m/s^2}$, Earth radius $R = 6371\,\rm{km}$, and Earth's self-rotation angular velocity $\omega_E = 7.292\times 10^{-5}\,\rm{rad/s}$.

Additionally, unless explicitly stated otherwise, we consider the following scenario: the spherical distance between two ground devices $d = 10$ km, the inclination angle of the orbit $i = 15^\circ$, carrier frequency $f_c = 868$ MHz, bandwidth $B_1 = B_2 = 250$ kHz, and orbit height $H = 550$ km. We set the theoretical maximum elevation angle $\theta_{c\!A}=56^\circ$ and $\theta_{c\!B}=56.2^\circ$, along with the realistic elevation angle limits $[\theta_{\min \!A}=10^\circ, \theta_{\max \!A}=50^\circ]$ and $[\theta_{\min \!B}=10^\circ, \theta_{\max \!B}=50^\circ]$. Moreover, we assume that both ground devices utilize the same carrier frequency. 

\subsection{No-Doppler Case}
Firstly, we investigate the no-Doppler case, where $c_d$, $v_d$, and $\Delta f_k$ are all set to 0, consistent with the simulation approach of \cite{benkhelifa2022orthogonal}. However, we present only the distinct results here\footnote{While processing the complex cross-correlation values, we initially extract either the mean or maximum value and subsequently obtain the absolute value in the final step for both continuous and discrete time domains. However, in \cite{benkhelifa2022orthogonal}, their processing sequence differs between the continuous and discrete time domains, leading to noticeable inconsistency in results across the two domains.}.

\subsubsection{Same Bandwidth}
In the Fig. \ref{Corr_no_Doppler_Con_Dis}, with $\tau = 0$, $B_1=B_2=250$ kHz, we plotted the maximum and mean cross-correlation functions in both continuous-time and discrete-time domain.
We have validated the correctness of the derived analytical expressions in Table \ref{Integral ranges} and Table \ref{Summation ranges} by ensuring that they produce the same output results as those from the defining formulas in \eqref{cross-con} and \eqref{cross-dis}, which can be applied to the results of this figure and all of the following figures.

In Figs. \ref{Corr_no_Doppler_Con_Dis}-(a) and \ref{Corr_no_Doppler_Con_Dis}-(b), for the maximum cross-correlation, we notice the similar features of both time domains: 
\begin{itemize}
    \item The maximum cross-correlation is close to 1 when $SF_1 = SF_2$, significantly surpassing the case when $SF_1 \neq SF_2$. This suggests that LoRa signals are unlikely to be orthogonal when $SF_1 = SF_2$.
    \item When $SF_1 \neq SF_2$, lower $SF_2$ (i.e., $SF_2<SF_1$) exhibit higher maximum cross-correlation compared to higher $SF_2$ (i.e., $SF_2>SF_1$). Furthermore, as the difference between $SF$s decreases, the maximum cross-correlation tends to increase.
    \item If we exchange those two $SF$s, the maximum cross-correlation will stay the same, and this symmetric property can easily be verified from the analytical expressions.
\end{itemize}

Comparing the maximum cross-correlation factors between continuous and discrete time domains, when $SF_1 \neq SF_2$, we can see the values in the continuous-time domain are a little bit higher than those in the discrete-time domain.
\begin{figure}[ht]
  \centering
  \includegraphics[width=8.4cm]{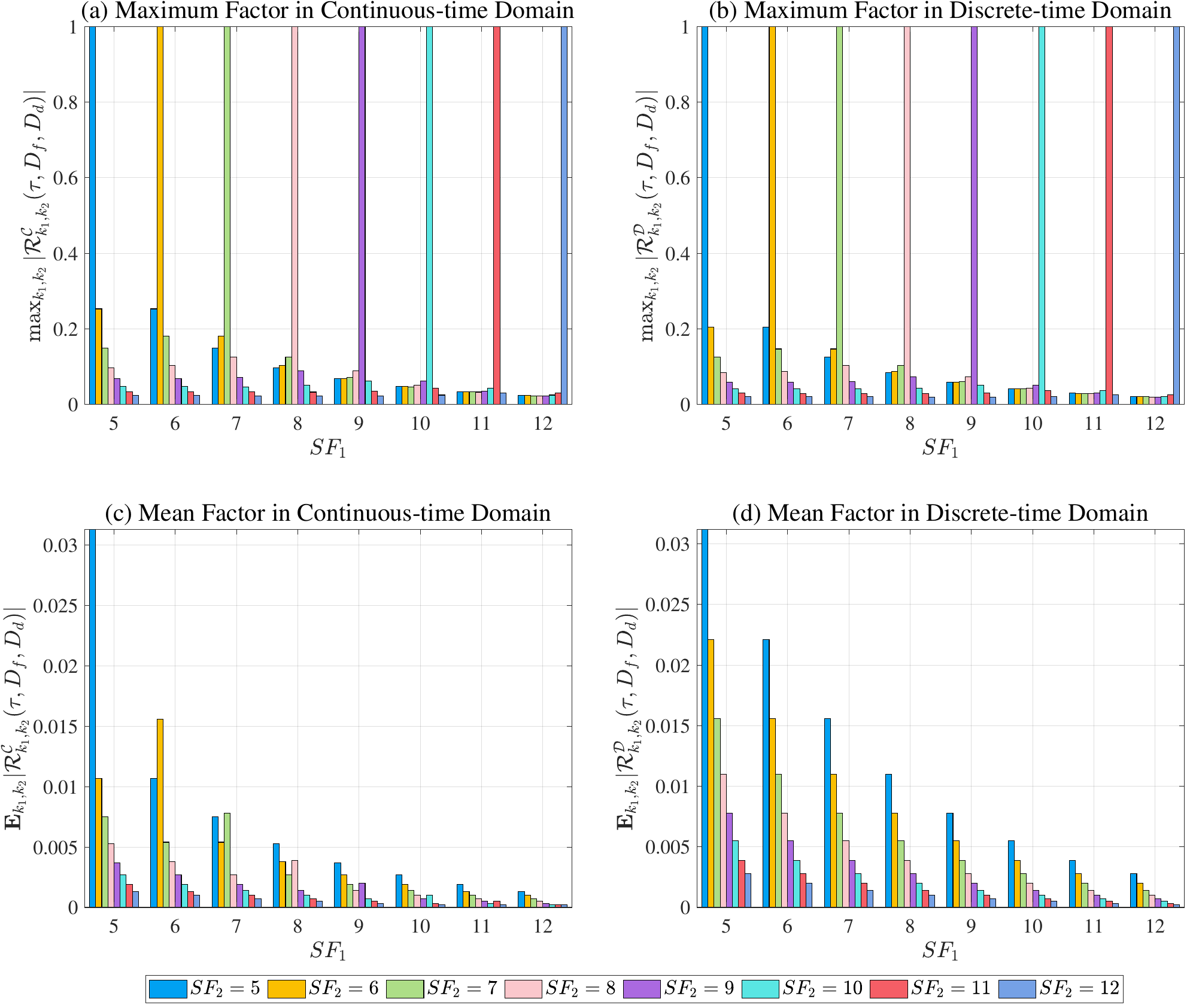}
  \caption{Maximum and mean cross-correlation factor between $SF$s without Doppler effect in both continuous-time and discrete-time domain.}
  \label{Corr_no_Doppler_Con_Dis}
\end{figure}
In Figs. \ref{Corr_no_Doppler_Con_Dis}-(c) and \ref{Corr_no_Doppler_Con_Dis}-(d), regarding the mean cross-correlation, we can notice different behaviors from Figs. \ref{Corr_no_Doppler_Con_Dis}-(a) and \ref{Corr_no_Doppler_Con_Dis}-(b):
\begin{itemize}
    \item $SF_1 = SF_2$ will no longer always hold the highest value for mean cross-correlation.
    \item When the difference between $SF$s increases, the value may increase. 
    \item When $SF_1 = SF_2$, the mean cross-correlations in both time domains are the same. When $SF_1 \neq SF_2$, the mean values in the discrete-time domain will be higher than those in the continuous time domain.
\end{itemize}

During the simulation for the two figures above, we have testified that as long as $B_1 = B_2$ and $\tau=0$, the realistic value of bandwidth has no impact on the final cross-correlation, which can also be proved from the analytical expressions.

\subsubsection{Different Bandwidth} 
In this part, we display the maximum cross-correlation results under different bandwidths, i.e. $B_1 \neq B_2$. We will only report the different results from \cite{benkhelifa2022orthogonal}.
Without losing generality, we consider the $SF$ ranging from $5$ to $10$ under both continuous and discrete time domains. The results are shown in Fig. \ref{no_doppler_con_dis_BW}.

\begin{figure}[ht]
  \centering
  \includegraphics[width=8.4cm]{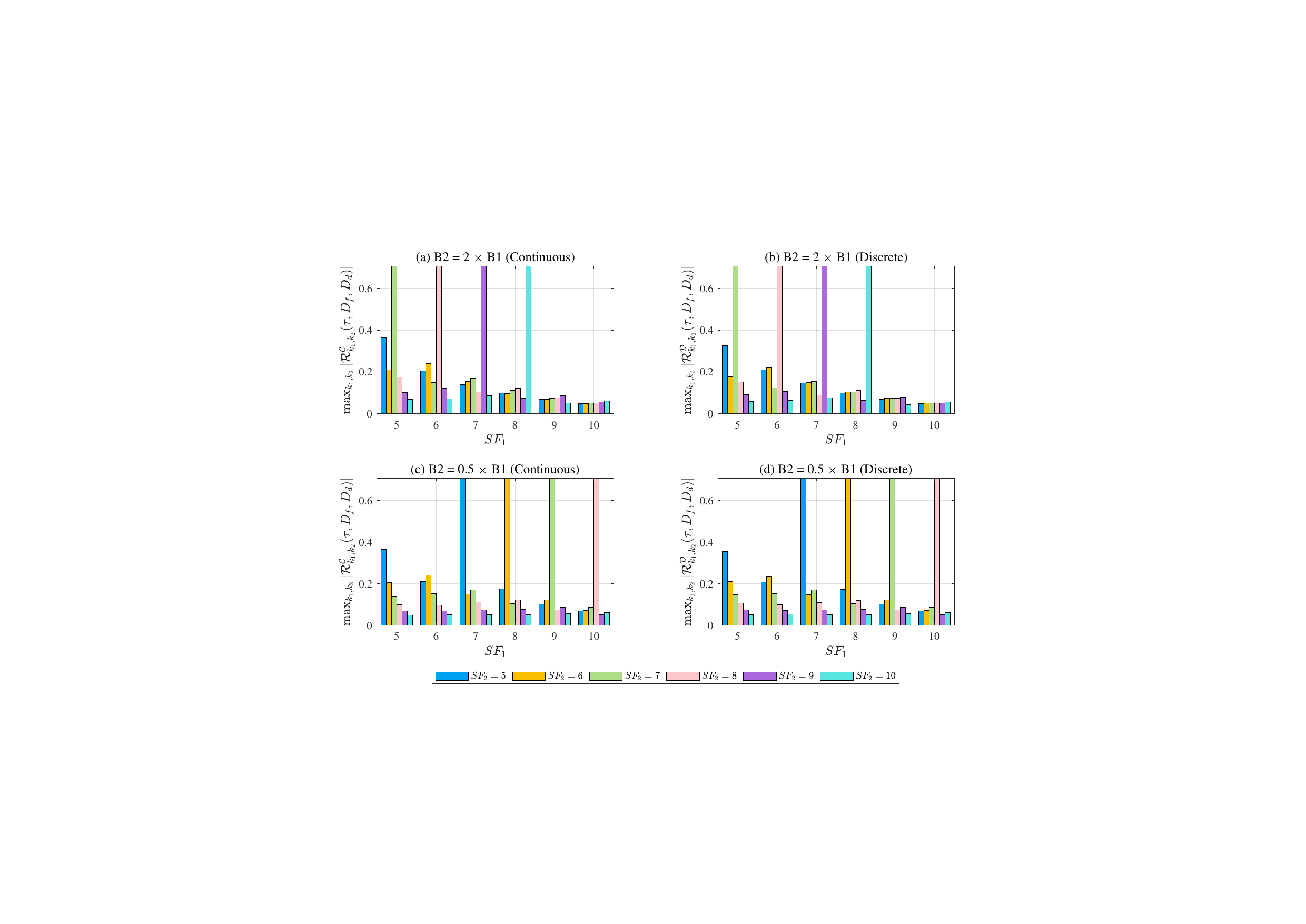}
  \caption{Maximum cross-correlation factor between $SF$s without Doppler effect given different bandwidths in both continuous and discrete time domains.}
  \label{no_doppler_con_dis_BW}
\end{figure}
We can observe that continuous and discrete time domains have nearly the same behaviors. In \cite{pham2020radio}, it introduced the chirp rate $C_r = B^2/2^{SF}$. Then in Fig. \ref{no_doppler_con_dis_BW}, we can conclude that the highest value of the maximum cross-correlation takes place when $C_{r1} = C_{r2}$. We can also observe that the highest value decreases to 0.707 in the cases when $B_1 = 2\times B_2$ or $B_1  = 0.5 \times B_2$. 

\subsection{With-Doppler Case}
In this part, we will illustrate the impact of the Doppler effect on the final cross-correlation or the orthogonality of LoRa signals. We have plotted the maximum and mean cross-correlation results in both time domains with Doppler shift and using the same bandwidth $B_1=B_2$ and $\tau=0$, which are shown in Figs. \ref{with_doppler_con} and \ref{with_doppler_dis}.  

In the continuous-time domain, for the mean cross-correlation factor, in Figs. \ref{with_doppler_con}-(b) and \ref{with_doppler_con}-(d), it remains relatively constant for both high Doppler shift and high Doppler rate compared to the no-Doppler case. 
For the maximum cross-correlation factor, when $SF_1 \neq SF_2$, the maximum cross-correlation remains largely unaffected by both high Doppler shift or high Doppler rate. In Fig. \ref{with_doppler_con}-(a), with $SF_1 = SF_2$ and high Doppler shift, it exhibits a slight decrease only when $SF$ reaches 11 or 12, particularly when $SF_1 = SF_2$. However, in Fig. \ref{with_doppler_con}-(c), with high Doppler rate and $SF_1 = SF_2$, the maximum cross-correlation factors experience obvious decreases while $SF_1$ reaches $8$, $9$, and $12$. 
\begin{figure}[ht]
  \centering
  \includegraphics[width=8.5cm]{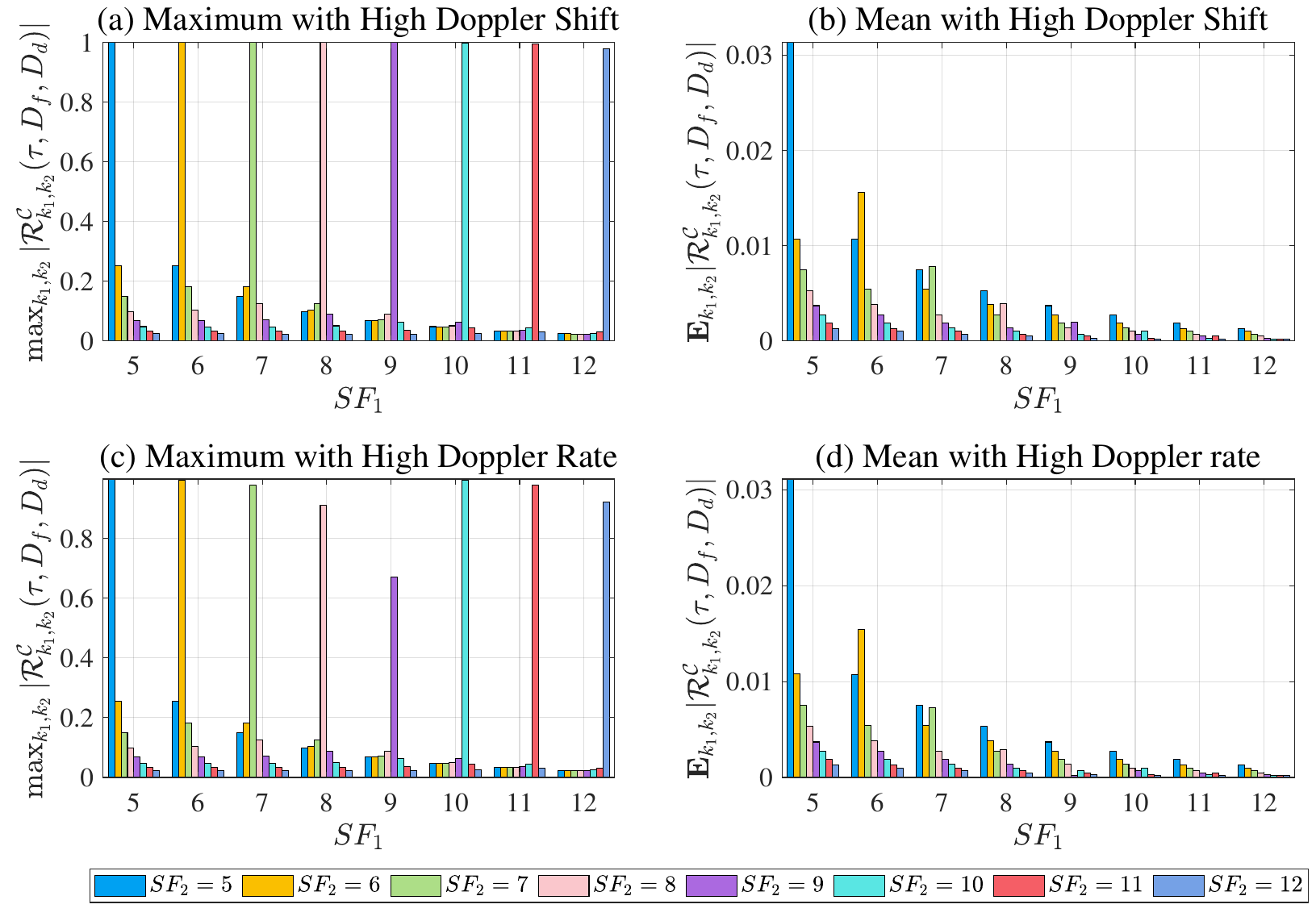}
  \caption{Maximum and Mean cross-correlation factors between $SF$s under high Doppler shift or Doppler rate in continuous-time domain.}
  \label{with_doppler_con}
\end{figure}
\begin{figure}[ht]
  \centering
  \includegraphics[width=8.5cm]{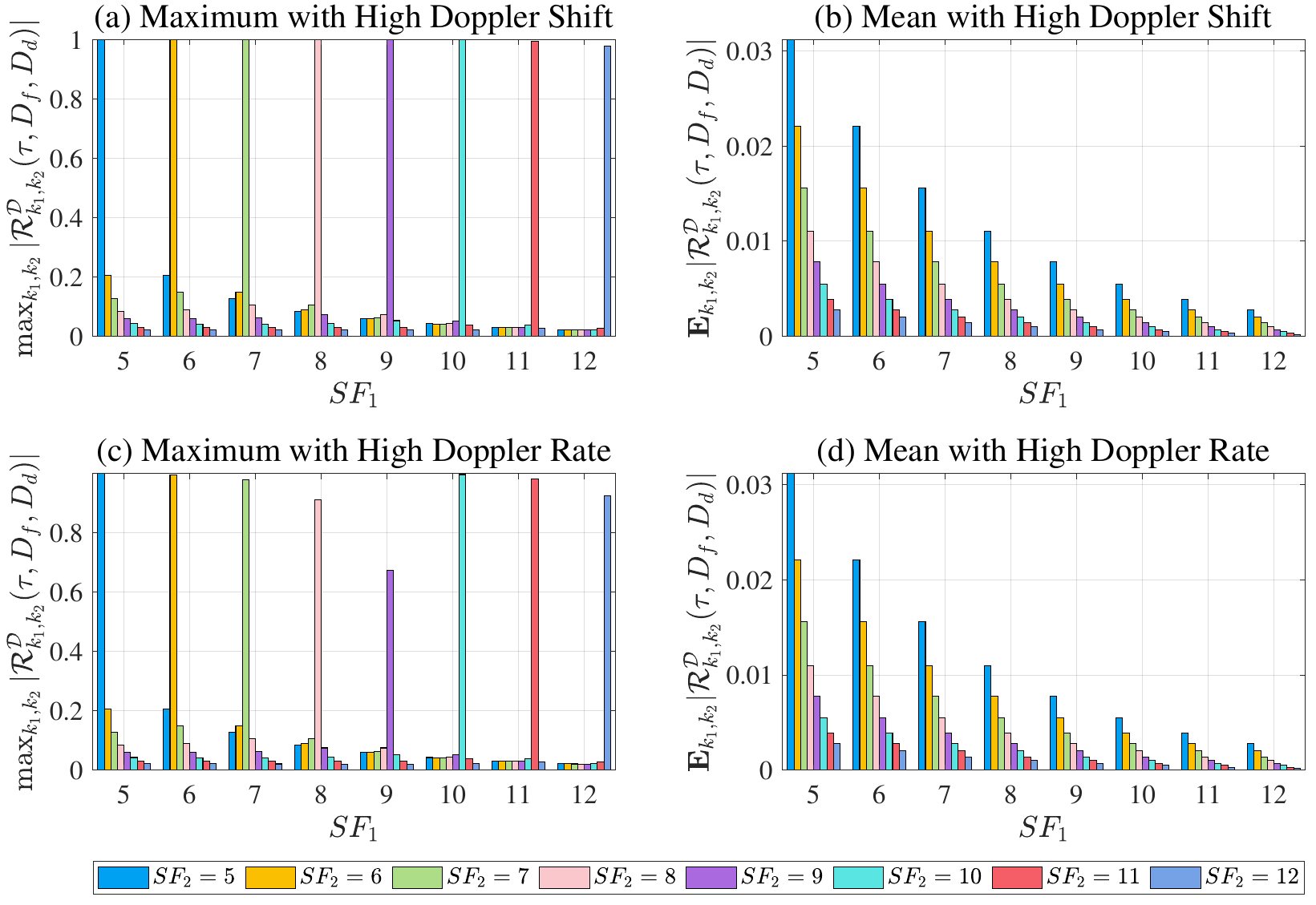}
  \caption{Maximum and Mean cross-correlation factors between $SF$s under high Doppler shift or Doppler rate in discrete-time domain.}
  \label{with_doppler_dis}
\end{figure}

In the discrete-time domain, the mean and maximum cross-correlation factors exhibit the same behaviors observed in the continuous-time domain, shown in Fig. \ref{with_doppler_dis}.

Therefore, for both continuous-time and discrete-time domains, we can conclude that the maximum cross-correlation with $SF_1 \neq SF_2$ and the mean cross-correlation is immune to the Doppler effect. However, the maximum cross-correlation with $SF_1 = SF_2$ is only immune to the high Doppler shift and not immune to the high Doppler rate.

\subsection{Explanation of the Impact of Doppler Shift}
In this subsection, we aim to provide a theoretical explanation strategy for the noticeable decrease observed in the preceding cross-correlation results under high Doppler rate. We will establish the relationship from transmission start time to differential Doppler shift $D_d$, to carrier frequency difference, and finally to the cross-correlation value. Our following analysis is based on the discrete-time domain and can be equivalently applicable to the continuous-time domain. This comprehensive analysis will enable us to interpret the impact of the Doppler shift on Figs. \ref{with_doppler_con} and \ref{with_doppler_dis}.

Firstly, for the relation between transmission start time and $D_d$, we can plot its curve based on $D_d = v_{d1} - v_{d2}$ and \eqref{Doppler}, shown in Fig. \ref{Dd_over_window}. Without losing generality, in this figure, we only show the $D_d$ values over the first part of the shared visibility window, which spans from transmission start time with a high Doppler shift to that with a high Doppler rate. It deserves noting that for any $SF_1$ or $SF_2$, the $D_d$ has the same variation as Fig. \ref{Dd_over_window}, approximately ranging from $-6 $ Hz and $-230$ Hz. This indicates that the high Doppler rate will cause a higher $|D_d|$ than the high Doppler shift. 

\begin{figure}[ht]
  \centering
  \includegraphics[width=8cm]{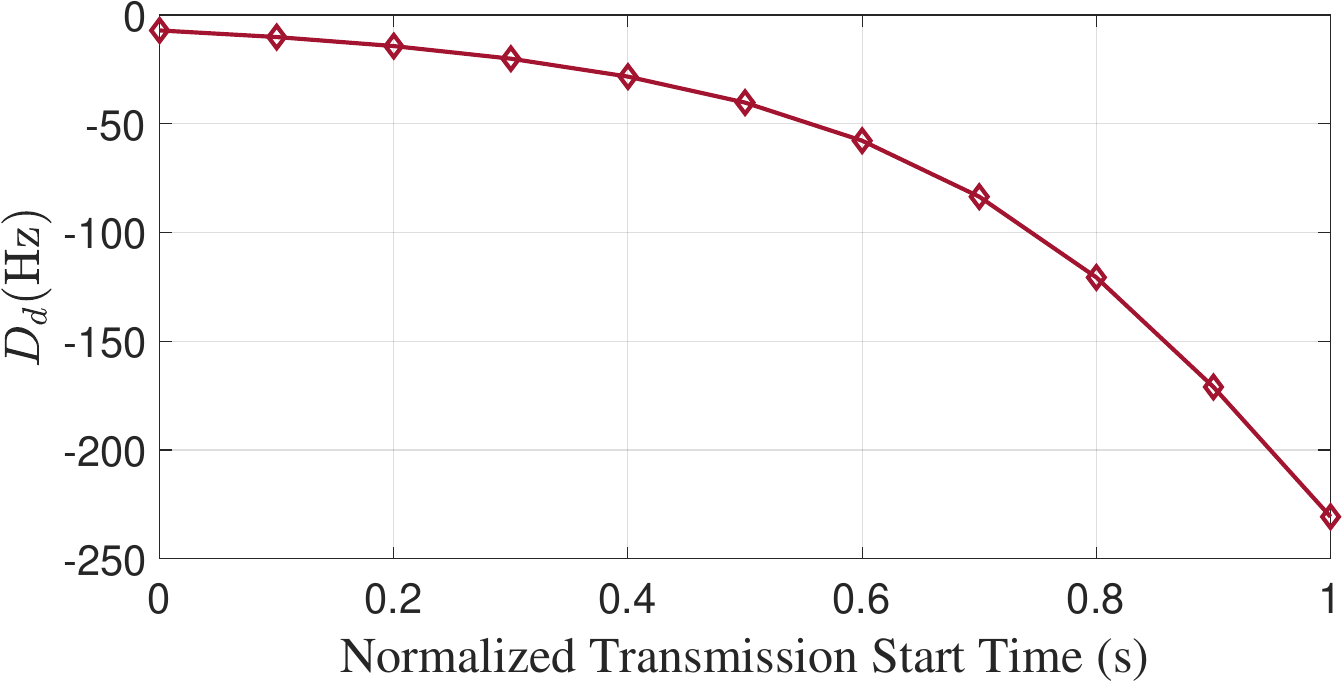}
  \caption{Differential Doppler shift during the first part of shared visibility window. The graph depicts the differential Doppler shift curve under the transition from a high Doppler shift to a high Doppler rate as a function of normalized transmission start time.}
  \label{Dd_over_window}
\end{figure}

Secondly, we will establish the relationship between $D_d$ and the carrier frequency difference. Based on the previously defined Doppler shift formula in \eqref{Doppler}, our simulations indicate that the Doppler shift can be reasonably approximated as a constant equal to the initial Doppler shift $v_d$ over one symbol period. Taking Fig. \ref{approximation} as an example, over one symbol period, the range of Doppler shift is approximately $5$ Hz, which is much smaller than the magnitude of Doppler shift $19$ kHz. Therefore, by treating it as a constant, Fig. \ref{carrier_shift} is plotted to show the impact of Doppler shift on the LoRa signal frequency. 
\begin{figure}[ht]
  \centering
  \includegraphics[width=8cm]{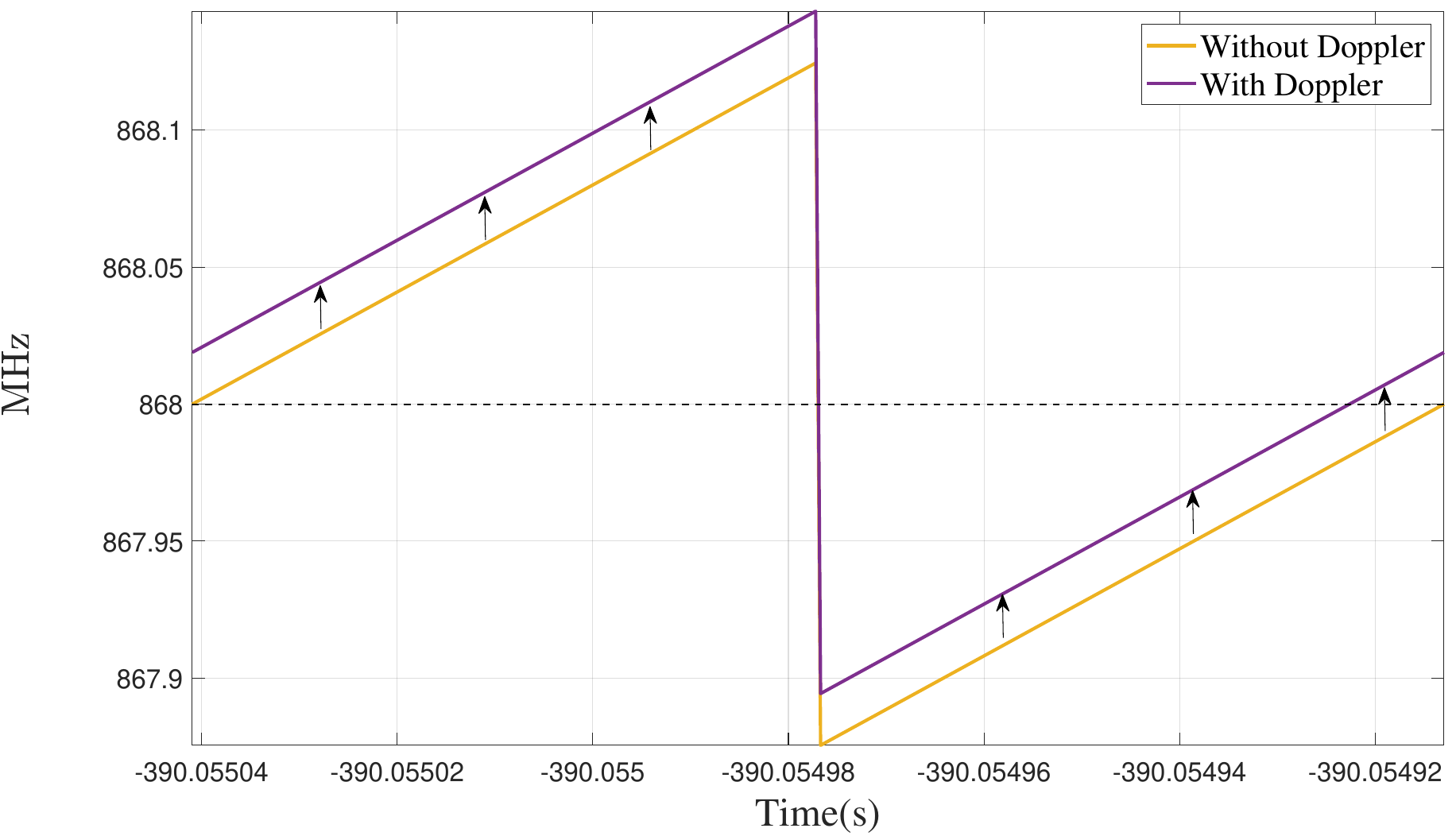}
  \caption{LoRa signal frequency with and without Doppler shift within one chirp period, with carrier frequency 868 MHz and bandwidth 250 kHz}.
  \label{carrier_shift}
\end{figure}

In Fig. \ref{carrier_shift}, the black dashed line shows the initial carrier frequency $868$ MHz. The purple line consistently surpasses the gold line, indicating a Doppler-induced frequency shift. By slightly raising the gold line, as indicated by arrows, it will align with the purple line. This figure implies that the Doppler-induced impact on the frequency curve can also be achieved by introducing a carrier frequency offset (CFO) equal to the constant Doppler shift value $v_d$. Furthermore, we can conclude that the Doppler-induced impact on the cross-correlation between two LoRa signals will be achieved by introducing a carrier frequency difference on these two signals, and this difference equals to $f_{c}+v_{d1} - (f_{c}+v_{d2}) = v_{d1} - v_{d2} = D_d$. 


Thirdly, we introduce the relationship between carrier frequency difference and maximum cross-correlation. From \eqref{cross-con} and \eqref{cross-dis}, we have simulated and obtained the maximum or mean cross-correlation results given varying carrier frequency differences. Based on these results, we have observed that the mean cross-correlation shows no variation given varying carrier frequency differences, so it is immune to the Doppler shift, which aligns perfectly with the observations we had in Figs. \ref{with_doppler_con}-(b), \ref{with_doppler_con}-(d), \ref{with_doppler_dis}-(b), and \ref{with_doppler_dis}-(d). However, the obvious fluctuations are observed for maximum cross-correlation. Therefore, in what follows, we will focus on the maximum cross-correlation results.

For the results when $SF_1 \neq SF_2$, without losing generality, we provided eight examples in Fig. \ref{carrier_impact_diff_SF}.
 \begin{figure}[ht]
  \centering
  \includegraphics[width=8.4cm]{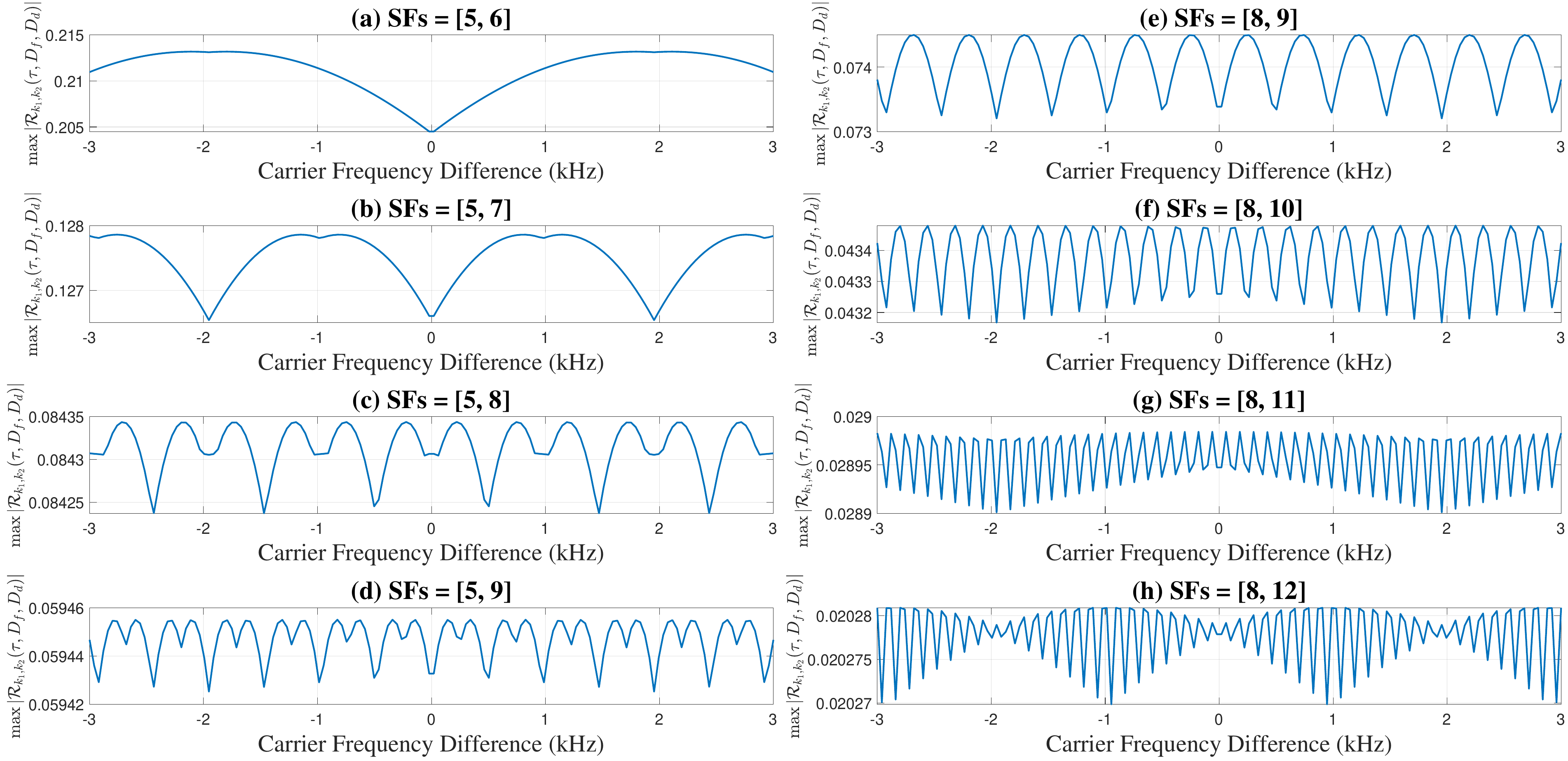}
  \caption{Carrier frequency difference’s impact on maximum cross-correlation when $SF_1 \neq SF_2$.}
  \label{carrier_impact_diff_SF}
\end{figure}
It is evident that although the maximum cross-correlation fluctuates with variations in carrier frequency difference, this fluctuation remains within a small range, which aligns well with the observed negligible impact of the Doppler effect in Figs. \ref{with_doppler_con}-(a), \ref{with_doppler_con}-(c), \ref{with_doppler_dis}-(a), and \ref{with_doppler_dis}-(c) when $SF_1 \neq SF_2$. 

For the results when $SF_1 = SF_2$, we plot the results with $SF$ from 5 to 12 in Fig. \ref{carrier_impact_all_same_SF}.
\begin{figure}[ht]
  \centering
  \includegraphics[width=8.4cm]{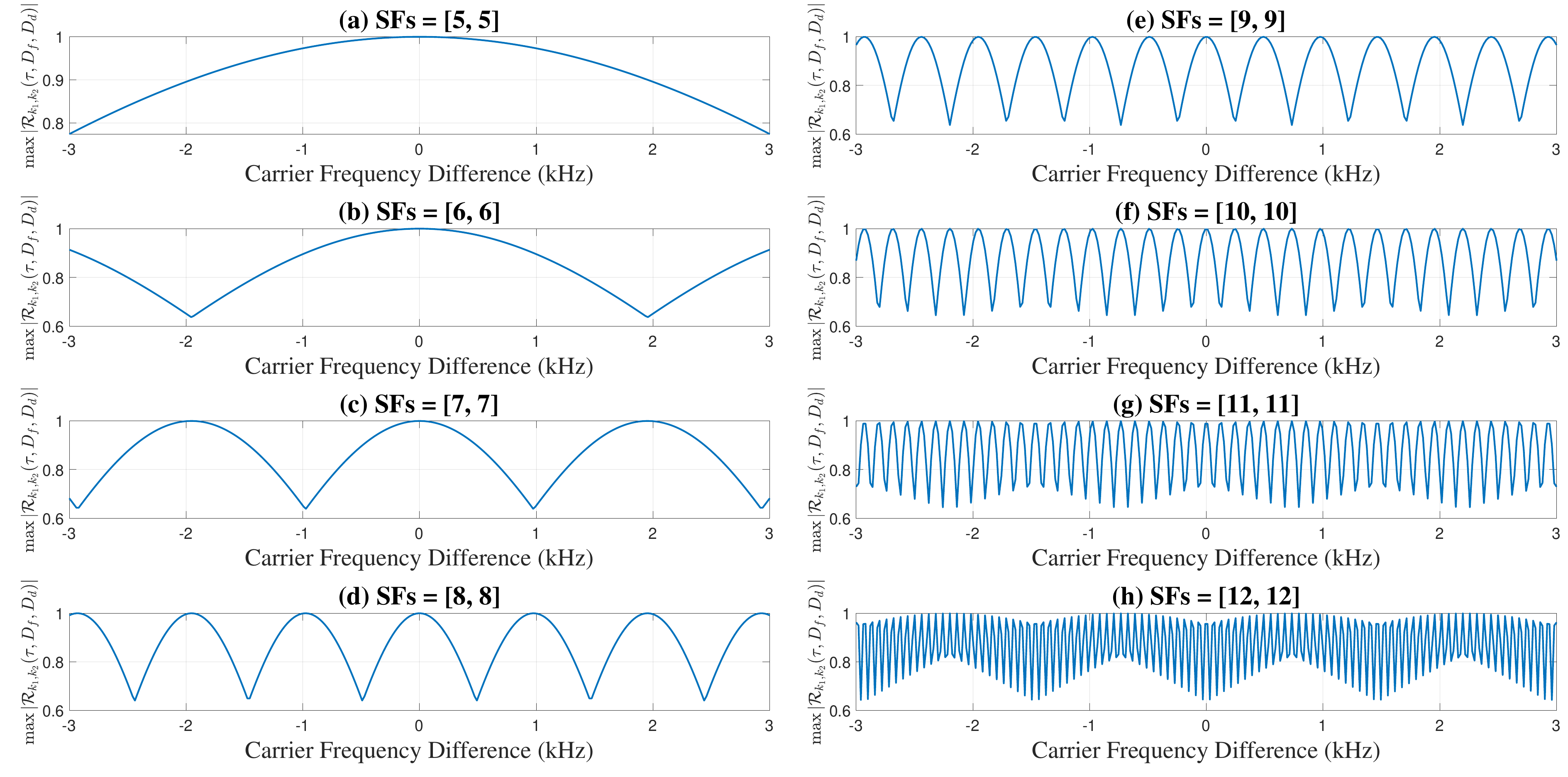}
  \caption{Carrier frequency difference’s impact on maximum cross-correlation when $SF_1 = SF_2$.}
  \label{carrier_impact_all_same_SF}
\end{figure}
In this figure, the maximum cross-correlation will mainly fluctuate between $0.6$ and $1$, which also aligns well with the decreased cross-correlation values observed in Fig. \ref{with_doppler_con}(c) and  Fig. \ref{with_doppler_dis}(c) when $SF_1 = SF_2$. Meanwhile, compared with the no-Doppler case, the maximum cross-correlations show a small variation when the carrier frequency difference is close to $0$ Hz. When the carrier frequency difference increases, large variations will occur periodically. Specifically, based on Fig. \ref{Dd_over_window}, the transmission start time with a high Doppler shift, with a carrier frequency difference close to $0$ Hz, will cause tiny changes in the maximum cross-correlation. However, the transmission start time with a high Doppler rate, with a carrier frequency difference around $-230$ Hz, will cause a large variation in the maximum cross-correlation. Therefore, it indicates that the maximum cross-correlation when $SF_1 = SF_2$ has good immunity to the high Doppler shift but no immunity to the high Doppler rate. Moreover, as the $SF$ increases, the curve becomes narrower. It can also be concluded that if there are two $SF$s: $m$ and $n$ $(m<n)$, the period of $SF$ = $n$ curve will be $\frac{1}{2^{n-m}}$ of the period of $SF = m$ curve. Thus, the results of high $SF$ are more sensitive to the Doppler effect than low $SF$.

Finally, based on the three relations mentioned above, we can establish the relationship from transmission start time to cross-correlation. To illustrate it, we examine an example for $SF_1 = SF_2 = 9$. 
If we choose the transmission start time in Fig. \ref{Dd_over_window} that obtained $D_d=-230$ Hz, the high Doppler rate will occur. 
According to Fig. \ref{carrier_impact_all_same_SF}-(e), with the carrier frequency difference equaling $-230$ Hz, the maximum cross-correlation will be $0.67$. This matches the results in Fig. \ref{with_doppler_con}-(c) and  Fig. \ref{with_doppler_dis}-(c) perfectly and confirms the rationality and feasibility of our explanations above.

\subsection{Parameter Analysis}\label{Parameter_analysis}
As the maximum cross-correlation when $SF_1 = SF_2$ is sensitive to Doppler shift, in this section, we will investigate its variation under various scenario parameters, including orbit height, orbit inclination, and the ground device distance. With our established relationship between transmission start time and maximum cross-correlation, we illustrate the curves of maximum cross-correlation when $SF_1 = SF_2$, and the curves of $D_d$ when transmission start time varies, given different scenario parameters and $SF$s. 

\subsubsection{Orbit Height} 
In Fig. \ref{H_final}, we plot the maximum cross-correlation and $D_d$ for different orbit heights under different $SF$s. 
In Fig. \ref{H_final}-(b), from high Doppler shift to high Doppler rate, the curves of $D_d$ with eight different orbit heights exhibit a consistently decreasing trend, and the disparity in $D_d$ among these orbit heights steadily increases. Notably, lower orbit height gets a higher absolute value $|D_d|$. Hence, it can be concluded that $D_d$ is quite sensitive to the variations in orbit height.

\begin{figure}[ht]
  \centering
  \includegraphics[width=8.4cm]{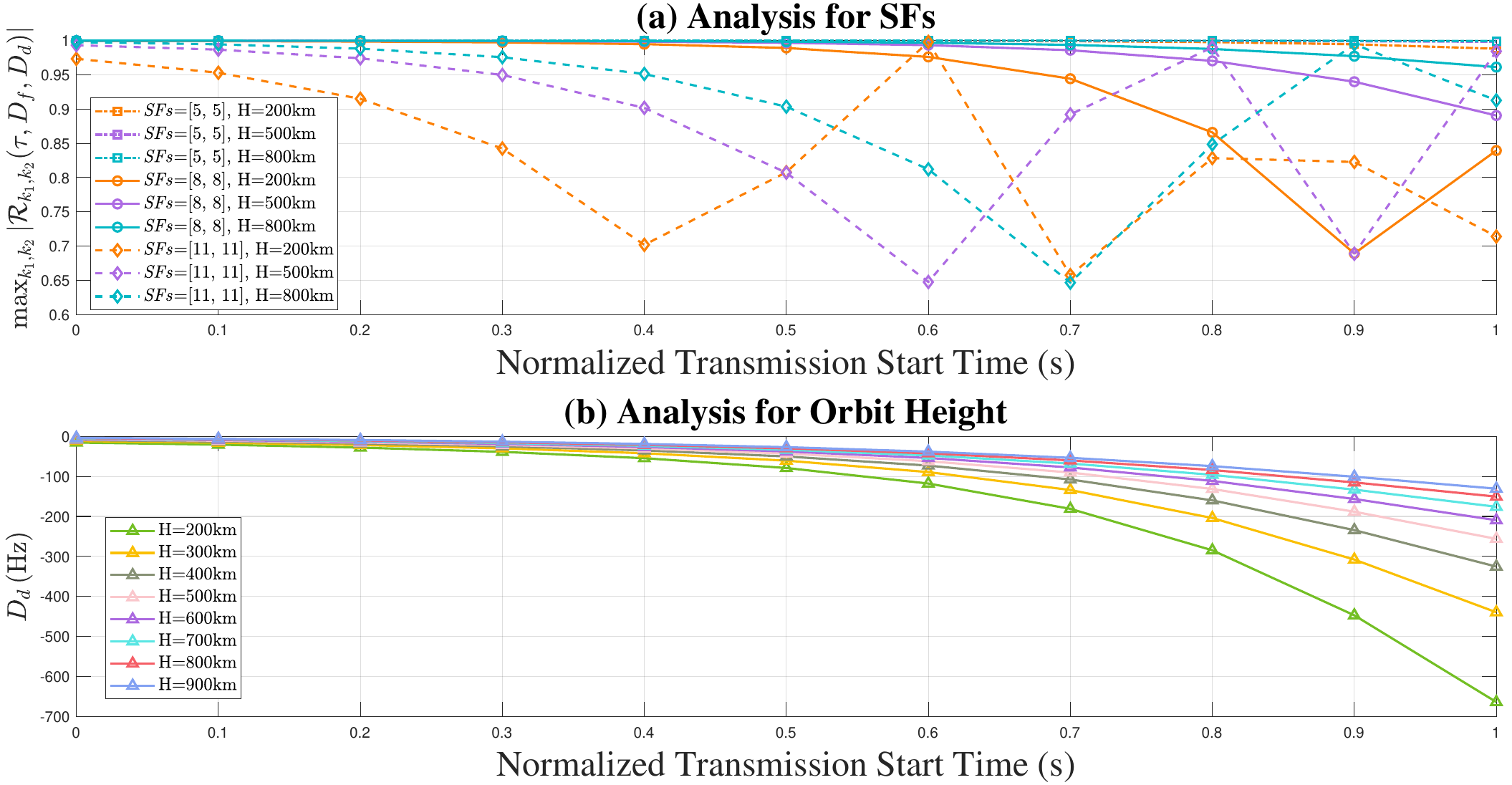}
  \caption{Parameter analysis for orbit height under different $SF$s.}
  \label{H_final}
\end{figure}
In Fig. \ref{H_final}-(a), from transmission start time with high Doppler shift to transmission start time with high Doppler rate, higher $SF$s hold more obvious fluctuations on the maximum cross-correlation than the lower $SF$s for a fixed orbit height. The maximum cross-correlation has more variation under low orbit height for fixed $SF$s. 

\subsubsection{Inclination Angle}
Similarly, we have plotted the maximum cross-correlation and $D_d$ under eight different inclination angles of orbit $i$ in Fig. \ref{i_final}-(a) and \ref{i_final}-(b). 
Unlike the orbit height, both maximum cross-correlation and $D_d$ are very close to each other even when the inclination angle varies.  
\begin{figure}[ht]
  \centering
  \centering
  \includegraphics[width=8.4cm]{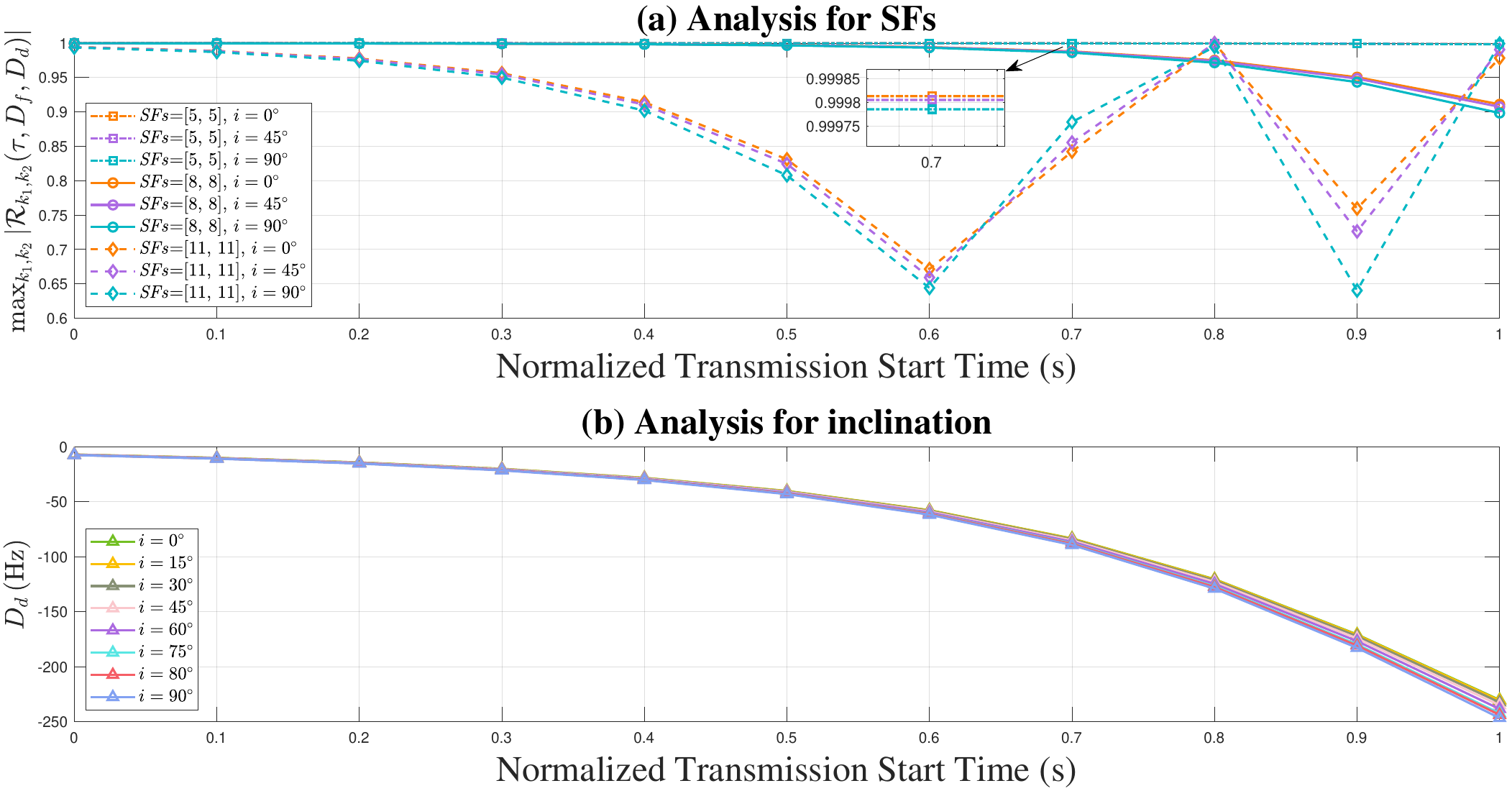}
  \caption{Parameter analysis for inclination angle under different $SF$s.}
  \label{i_final}
\end{figure}

Therefore, the maximum cross-correlation variation is nearly not affected by the inclination angle.

\subsubsection{Ground Device Distance}
In Fig. \ref{d_final}, we plotted the curves for different ground device distance $d$ under different $SF$s. 
Fig. \ref{d_final}-(b) shows that as $d$ gets longer, the $|D_d|$ is higher, which reaches 1200 \rm{Hz} at transmission start time with high Doppler rate when $d = 50$ \rm{km}. Fig \ref{d_final}-(a) shows that the maximum cross-correlation has more variation under long ground device distance.

\begin{figure}[ht]
  \centering
  \includegraphics[width=8.4cm]{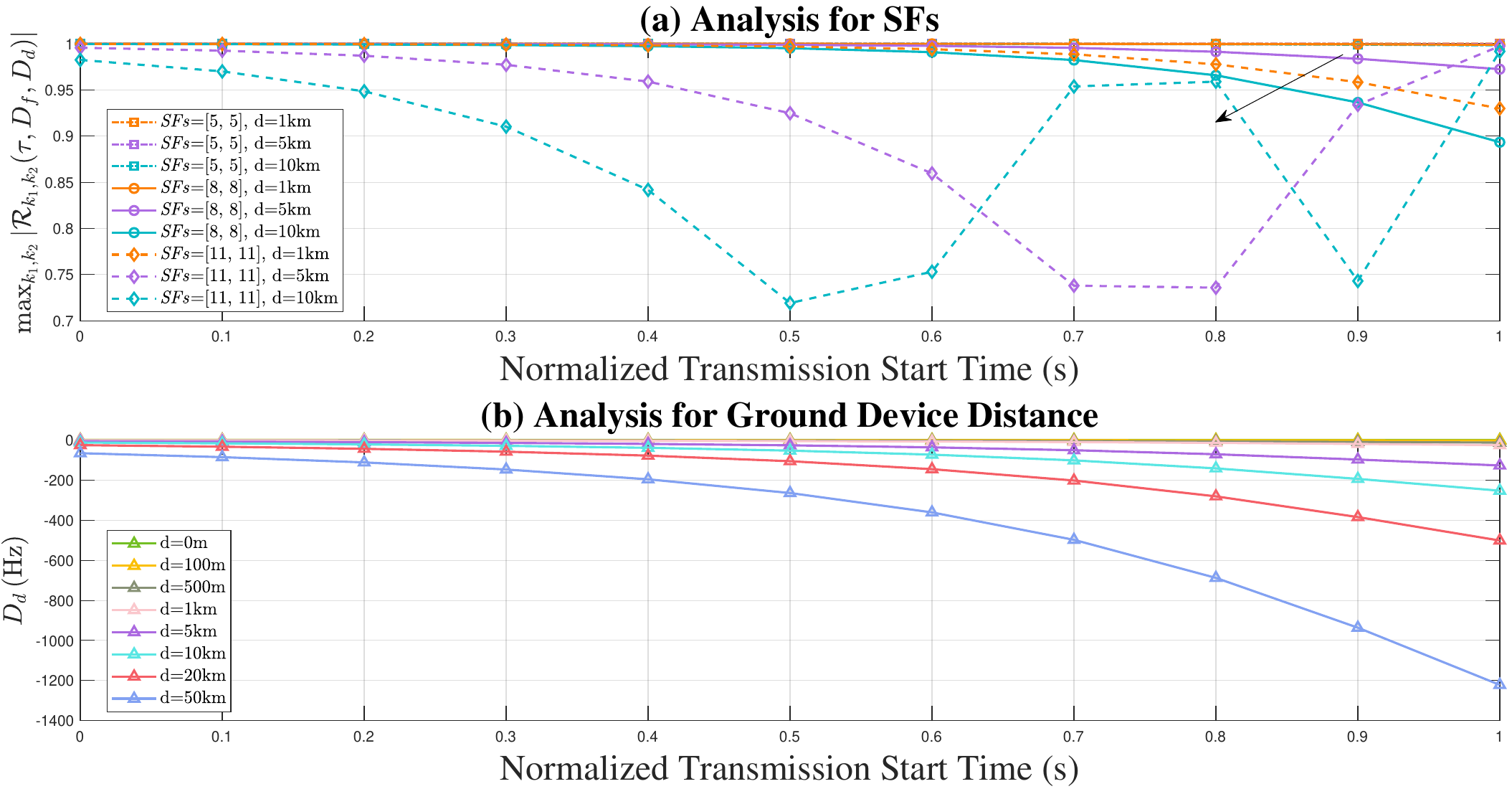}
  \caption{Parameter analysis for ground device distance under different $SF$s.}
  \label{d_final}
\end{figure}
It is worth noting that the fluctuation of cross-correlation is mainly based on the $D_d$ for a given normalized transmission start time, and the corresponding curve like Fig. \ref{carrier_impact_all_same_SF} for the given parameter setting. According to the above results, to minimize Doppler's impact on cross-correlation, it is recommended to use low $SF$s, high orbit height, short ground device distance, and transmission start time with high Doppler shift for transmission.

\subsection{BER Simulation}
In this part, we will analyze the impact of high Doppler shift or high Doppler rate on the BER. To make the analysis more comprehensive, we consider $\theta_{c\!A}=89^\circ$, $\theta_{c\!B}=89.2^\circ$, $[\theta_{\min \!A}=10^\circ, \theta_{\max \!A}=88^\circ]$ and $[\theta_{\min \!B}=10^\circ, \theta_{\max \!B}=88^\circ]$.

Firstly, we analyze the BER results for the LoRa signal $SF_1$ with Doppler shift and without interference signal under varying signal-to-noise ratio (SNR), as illustrated in Fig. \ref{BER_one_SF_Doppler}.  Under a high Doppler rate, as SNR increases, $SF_1 = 5,6,7$ get better BER performance, while $SF_1 >7$ gets no improvement, as shown in Fig. \ref{BER_one_SF_Doppler}-(a). Under high Doppler shift, all $SF$ have bad BER performance over the whole SNR range, as shown in Fig. \ref{BER_one_SF_Doppler}-(b). 
\begin{figure}[ht]
  \centering
  \includegraphics[width=8.4cm]{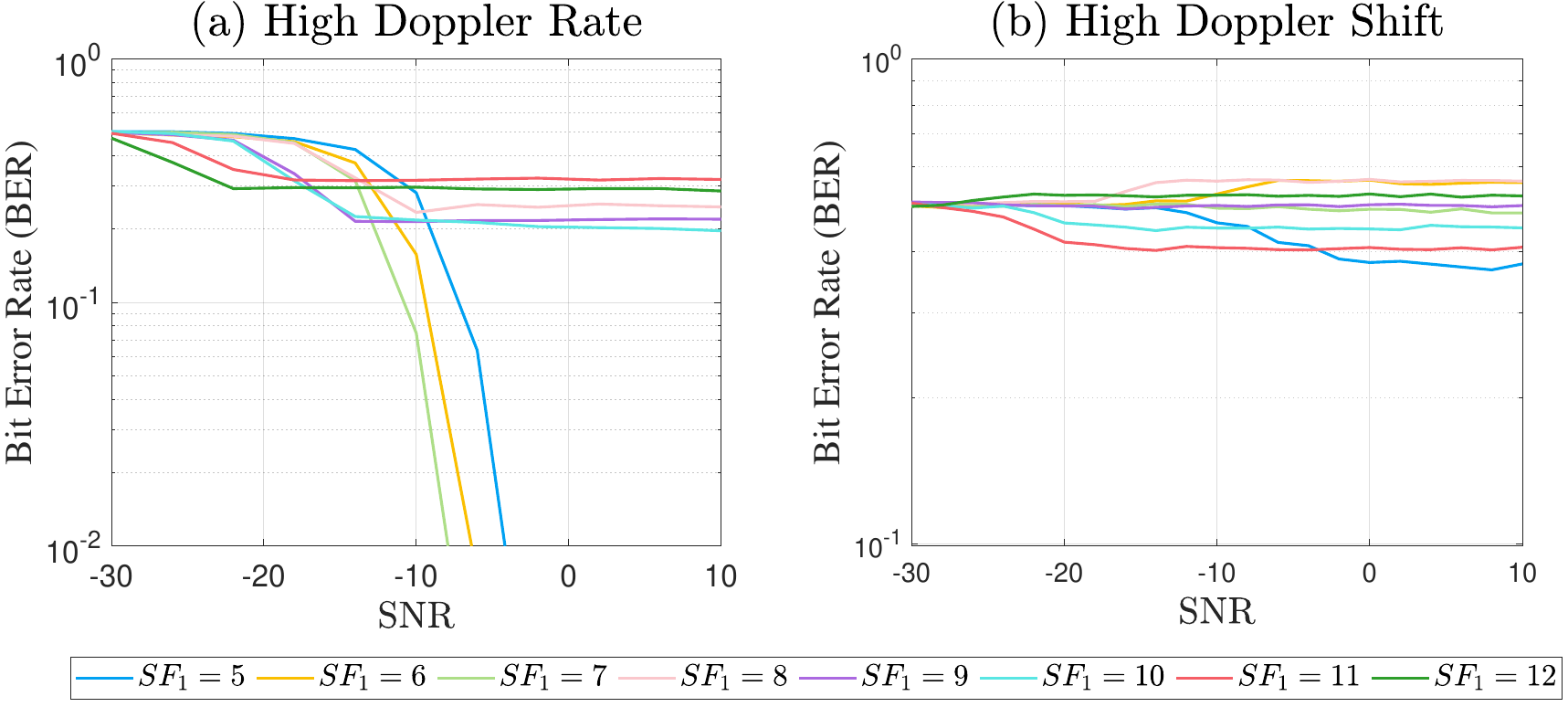}
  \caption{BER for $SF_1$ given different SNR with Doppler effect and without interference. (a) With high Doppler rate; (b) With high Doppler shift.}
  \label{BER_one_SF_Doppler}
\end{figure}

Secondly, we analyze the BER results for the LoRa signal $SF_1$ with Doppler shift and interference signal using $SF_2$ under varying signal-to-interference ratio (SIR). Taking $SF_1= 5$ and $SF_1= 10$ as an example, their BER performance given SNR = $0$ dB and different values of $SF_2$ are displayed in Fig. \ref{BER_two_SF_Doppler}. 
\begin{figure}[ht]
  \centering
  \includegraphics[width=8.4cm]{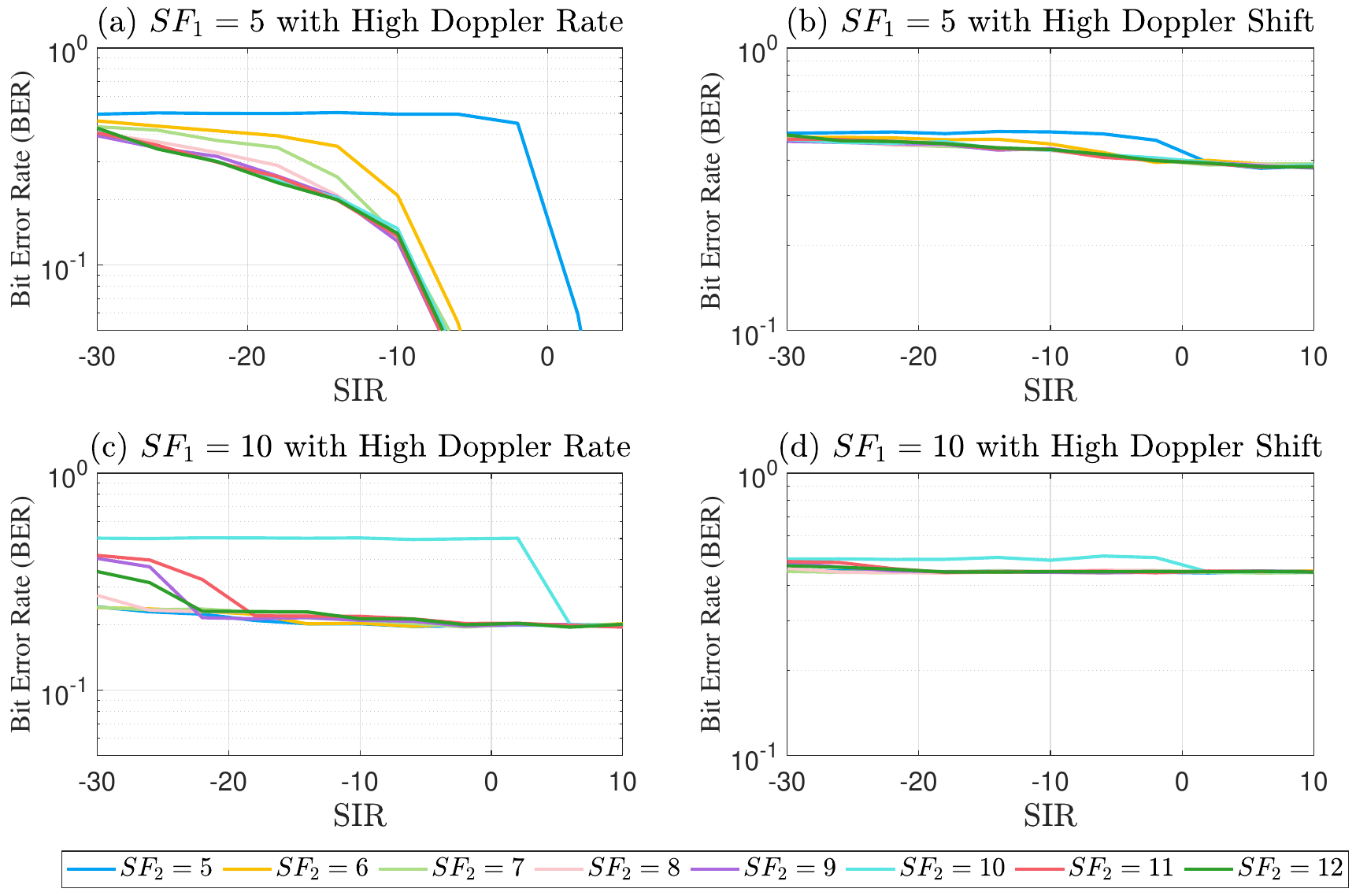}
  \caption{BER for $SF_1$ given different SIR with high Doppler shift or high Doppler rate and with interference $SF_2$.}
  \label{BER_two_SF_Doppler}
\end{figure}

In Fig. \ref{BER_two_SF_Doppler}, the worst BER performance of the LoRa signal is obtained by the case with interference from the signal using the same SF. The results of the no Doppler shift scenario in \cite{benkhelifa2022orthogonal} and \cite{al2021iot} have shown that BER gets better by shifting left when increasing the SF.
Comparing our results with the case of no Doppler, for $SF_1 = 5$ or $10$, under the high Doppler rate, BER performs slightly worse under interference. 
Under the high Doppler shift, we got the worst BER performance for all values of SIR. 
We can conclude that when increasing $SF_1$, BER will get worse. 
From a high Doppler rate to a high Doppler shift, the overall BER performance will deteriorate. 

To interpret these observations in Figs. \ref{BER_one_SF_Doppler} and \ref{BER_two_SF_Doppler}, we believe that their behaviors are related to the tolerable frequency shift threshold introduced in \cite{cao2021influence,ben2022new}, which can be calculated by $B/{2^{SF+1}}$.
Specifically, when the Doppler shift is below the threshold, increasing the SIR or SNR will help to improve the BER performance. Otherwise, the increase of the SIR or SNR will just contribute to little or no improvement in the BER performance, as observed in Figs. \ref{BER_two_SF_Doppler}-(b) and \ref{BER_two_SF_Doppler}-(d).

In \cite{benkhelifa2022orthogonal}, the authors concluded that the BER trend is aligned with the maximum cross-correlation values. However, we find that this alignment is no longer valid when the Doppler shift exists. Specifically, according to Fig. \ref{Corr_no_Doppler_Con_Dis}, \ref{with_doppler_con}, and \ref{with_doppler_dis}, from no-Doppler case to the case with the high Doppler shift, the maximum cross-correlations remained the same, but the BER performance became much worse. From no-Doppler case to the case with a high Doppler rate, the maximum cross-correlation decreased, but its BER performance was even better than that with a high Doppler shift. 

Therefore, to obtain promising BER performance, it is significant to ensure the Doppler shift is below the threshold. On one hand, low $SF$ and high bandwidth can be chosen to ensure a high threshold. On the other hand, a transmission start time with a high Doppler rate can be chosen to reduce the Doppler shift value. However, for the immunity of cross-correlation, we have recommended a different choice, which is transmission start time with a high Doppler shift. These conflicting suggestions indicate that it is necessary to design Doppler shift compensation techniques for conditions with high Doppler shift or high Doppler rate to maintain the orthogonality of LoRa modulation and to satisfy good BER performance. 

\section{Conclusion} \label{conclusion}
In this paper, we provided a detailed method for positioning the relative motion and locations between one moving LEO satellite and two stationary ground devices in a 3D system model. We proposed, for the first time, the shared visibility window for satellites towards two ground devices, and the general expression of LoRa waveform as well as the cross-correlations with Doppler shift in both continuous and discrete time domains. 

To further exhibit the underlying mechanism behind the cross-correlation results' trend affected by the Doppler effect, we proposed a theoretical explanation strategy, which highlights the relationship between transmission start time, differential Doppler shift, carrier frequency difference, and cross-correlation.
Comparing the with-Doppler and no-Doppler cases, we found the maximum cross-correlation for $SF_1 \!\neq\! SF_2$ and the mean cross-correlation are immune to the Doppler effect. But the maximum cross-correlation for $SF_1\!=\!SF_2$ is only immune to the high Doppler shift but not to the high Doppler rate. With the parameter analysis based on the maximum cross-correlation, we figured out that to improve the immunity of cross-correlation towards the Doppler effect, it is recommended to use low $SF$s, high orbit height, short ground device distance, and transmission start time with high Doppler shift for transmission.

Moreover, we noted that that BER under Doppler shift with or without interference is more sensitive to high Doppler shift. BER performance improves with increasing SIR or SNR only when the Doppler shift remains below the tolerable frequency shift threshold. To enhance the immunity of BER to the Doppler effect, the Doppler shift value is expected not to exceed the tolerable frequency shift threshold. Therefore, low $SF$, high bandwidth, and transmission start time with a high Doppler rate are recommended in this case. 

Last but not least, we found that the alignment of BER and maximum cross-correlation behaviors is no longer valid, and the recommendations on the transmission start time to enhance the immunity based on cross-correlation and BER analysis are controversial. These phenomena highlight the necessity of the Doppler compensation method to guarantee LoRa's good performance in satellite communication. In future work, we will explore the impact of time delay on cross-correlation and develop a feasible Doppler shift compensation scheme. Additionally, we plan to conduct real-world experiments to assess LoRa’s suitability for satellite communications. Meanwhile, as more and more satellites are deployed in space, we plan to extend this analysis work to the communication scenario with multiple satellites operating together to serve the ground devices.

\bibliographystyle{IEEEtran}
\bibliography{IEEEabrv,ref}

\end{document}